\newtheorem{theorem}{Theorem}[section]
\newtheorem{lemma}[theorem]{Lemma}
\newtheorem{observation}[theorem]{Observation}
\newtheorem{claim}[theorem]{Claim}
\newtheorem{subclaim}[theorem]{Sub-Claim}
\newenvironment{definition}[1][Definition]{\begin{trivlist}
		\item[\hskip \labelsep {\bfseries #1}]}{\end{trivlist}}
\newcommand{\MYQUEUE}{Jiffy}
\title{\MYQUEUE: A Fast, Memory Efficient, Wait-Free Multi-Producers Single-Consumer Queue}
\author{Dolev Adas \ \ \ and \ \ \  Roy Friedman\\
	Computer Science Department\\
	Technion\\
	\texttt{\{sdolevfe,roy\}@cs.technion.ac.il}
}
\begin{document}
	
\maketitle

\begin{abstract}
In applications such as sharded data processing systems, sharded in-memory key-value stores, data flow programming and load sharing applications, multiple concurrent data producers are feeding requests into the same data consumer.
This can be naturally realized through concurrent queues, where each consumer pulls its tasks from its dedicated queue.
For scalability, wait-free queues are often preferred over lock based structures.

The vast majority of wait-free queue implementations, and even lock-free ones, support the multi-producer multi-consumer model.
Yet, this comes at a premium, since implementing wait-free multi-producer multi-consumer queues requires utilizing complex helper data structures.
The latter increases the memory consumption of such queues and limits their performance and scalability.
Additionally, many such designs employ (hardware) cache unfriendly memory access patterns.

In this work we study the implementation of wait-free multi-producer single-consumer queues.
Specifically, we propose \MYQUEUE, an efficient memory frugal novel wait-free multi-producer single-consumer queue and formally prove its correctness.
We then compare the performance and memory requirements of \MYQUEUE{} with other state of the art lock-free and wait-free queues.
We show that indeed \MYQUEUE{} can maintain good performance with up to 128 threads, delivers up to $50\%$ better throughput than the next best construction we compared against, and consumes $\approx$90\% less memory.
\end{abstract}
\ifdefined\COVER
\newpage
\fi
\section{Introduction}
\label{sec:intro}

Concurrent queues are a fundamental data-exchange mechanism in multi-threaded applications.
A queue enables one thread to pass a data item to another thread in a decoupled manner, while preserving ordering between operations.
The thread inserting a data item is often referred to as the \emph{producer} or \emph{enqueuer} of the data, while the thread that fetches and removes the data item from the queue is often referred to as the \emph{consumer} or \emph{dequeuer} of the data.
In particular, queues can be used to pass data from multiple threads to a single thread - known as \emph{multi-producer single-consumer queue} (MPSC), from a single thread to multiple threads - known as \emph{single-producer multi-consumer queue} (SPMC), or from multiple threads to multiple threads - known as \emph{multi-producer multi-consumer queue} (MPMC).
\ifdefined\COVER
\else
SPMC and MPSC queues are demonstrated in Figure~\ref{fig:queues}.
\begin{figure}[t]
	\begin{subfigure}{0.36\textwidth}
		\center{
			\includegraphics[width=0.4\textwidth]{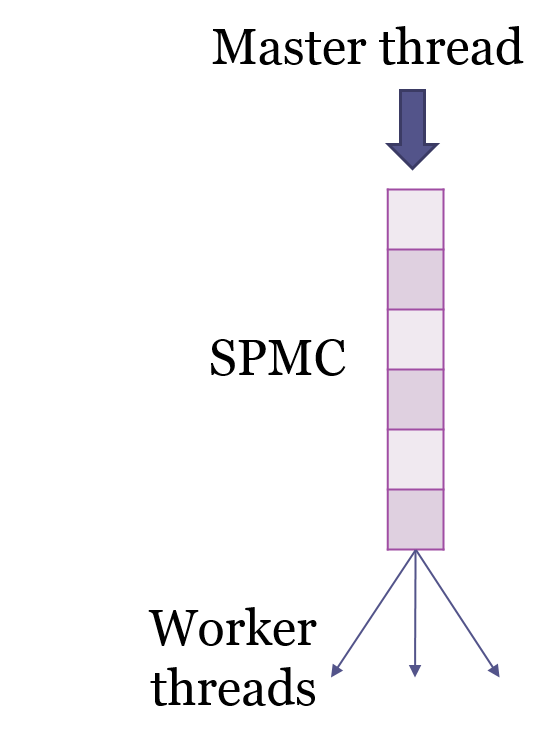}
		}
		\caption{SPMC queue used in a master worker architecture. A single master queues up tasks to be executed, which are picked by worker threads on a first comes first served basis.}
		\label{fig:SPMC}
	\end{subfigure}
\ \ 
	\begin{subfigure}{0.60\textwidth}
		\center{
			\includegraphics[width=0.6\textwidth]{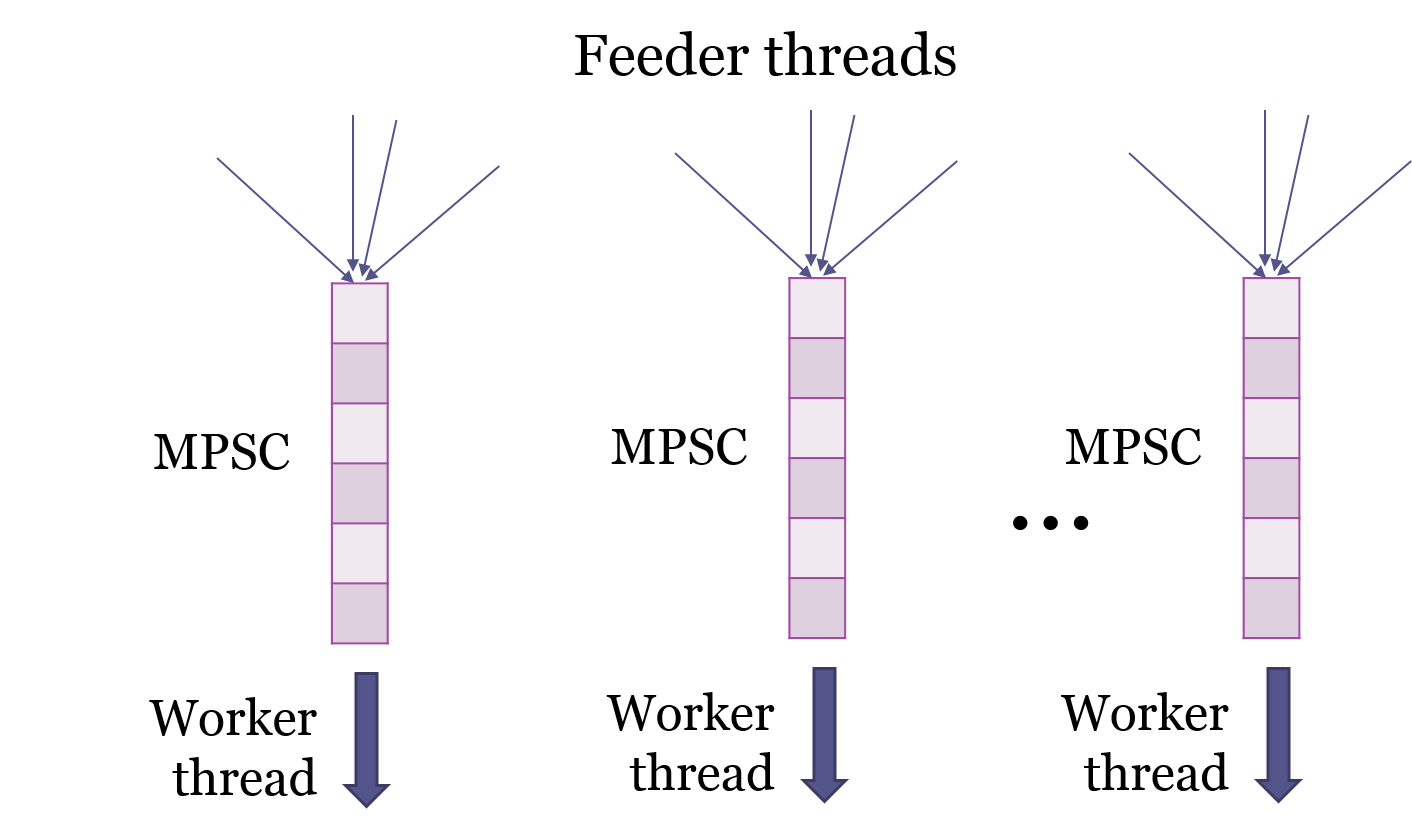}	
		}
		\caption{MPSC queues used in a sharded architecture. Here, each shard is served by a single worker thread to avoid synchronization inside the shard. Multiple collector threads can feed the queue of each~shard.}
		\label{fig:MPSC}
	\end{subfigure}
	\caption{SPMC vs. MPSC queues.}
	\label{fig:queues}
\end{figure}
\fi

MPSC is useful in sharded software architectures, and in particular for sharded in-memory key-value stores and sharded in-memory databases, resource allocation 
and data-flow computation schemes.
Another example is the popular Caffeine Java caching library~\cite{CaffeineProject}, in which a single thread is responsible for updating the internal cache data structures and meta-data.
As depicted in Figure~\ref{fig:MPSC}, in such architectures, a single thread is responsible for each shard, in order to avoid costly synchronization while manipulating the state of a specific shard.
In this case, multiple feeder threads (e.g., that communicate with external clients) insert requests into the queues according to the shards.
Each thread that is responsible for a given shard then repeatedly dequeues the next request for the shard, executes it, dequeues the next request, etc.
Similarly, in a data flow graph, multiple events may feed the same computational entity (e.g., a reducer that reduces the outcome of multiple mappers).
Here, again, each computational entity might be served by a single thread while multiple threads are sending it items, or requests, to be handled.

MPMC is the most general form of a queue and can be used in any scenario.
Therefore, MPMC is also the most widely studied data structure~\cite{DBLP:conf/wdag/Ladan-MozesS04,le2013correct,MS96,SLS06,DBLP:conf/podc/Scott02,DBLP:conf/ppopp/ScottS01,DBLP:conf/podc/ShavitZ99}.
Yet, this may come at a premium compared to using a more specific queue implementation.

Specifically, concurrent accesses to the same data structure require adequate concurrency control to ensure correctness.
The simplest option is to lock the entire data structure on each access, but this usually dramatically reduces performance due to the sequentiality and contention it imposes~\cite{herlihy2011art}.
A more promising approach is to reduce, or even avoid, the use of locks and replace them with \emph{lock-free} and \emph{wait-free} protocols that only rely on atomic operations such as \emph{fetch-and-add} (FAA) and \emph{compare-and-swap} (CAS), which are supported in most modern hardware architectures~\cite{herlihy1991wait}.
Wait-free implementations are particularly appealing since they ensure that each operation always terminates in a finite number of~steps.

Alas, known MPMC wait-free queues suffer from large memory overheads, intricate code complexity, and low scalability.
In particular, it was shown that wait-free MPMC queues require the use of a helper mechanism~\cite{help}.
On the other hand, as discussed above, there are important classes of applications for which MPSC queues are adequate.
Such applications could therefore benefit if a more efficient MPSC queue construction was found.
This motivates studying wait-free MPSC queues, which is the topic of this paper.

\paragraph*{Contributions} 

In this work we present \MYQUEUE{}, a fast memory efficient wait-free MPSC queue.
\MYQUEUE{} is unbounded in the the number of elements that can be enqueued without being dequeued (up to the memory limitations of the machine).
Yet the amount of memory \MYQUEUE{} consumes at any given time is proportional to the number of such items and \MYQUEUE{} minimizes the use of pointers, to reduce its memory~footprint.

To obtain these good properties, \MYQUEUE{} stores elements in a linked list of arrays, and only allocates a new array when the last array is being filled.
Also, as soon as all elements in a given array are dequeued, the array is released.
This way, a typical enqueue operation requires little more than a simple FAA and setting the corresponding entry to the enqueued value and changing its status from \texttt{empty} to \texttt{set}.
Hence, operations are very fast and the number of pointers is a multiple of the allocated arrays rather than the number of queued~elements.

To satisfy linearizability and wait-freedom, a dequeue operation in \MYQUEUE{} may return a value that is already past the head of the queue, if the enqueue operation working on the head is still on-going.
To ensure correctness, we devised a novel mechanism to handle such entries both during their immediate dequeue as well as during subsequent dequeues.

Another novel idea in \MYQUEUE{} is related to its buffer allocation policy.
In principle, when the last buffer is full, the naive approach is for each enqueuer at that point to allocate a new buffer and then try adding it to the queue with a CAS.
When multiple enqueuers try this concurrently, only one succeeds and the others need to free their allocated buffer.
However, this both creates contention on the end of the queue and wastes CPU time in allocating and freeing multiple buffers each time.
To alleviate these phenomena, in \MYQUEUE{} the enqueuer of the second entry in the last buffer already allocates the next buffer and tries to add it using CAS.
This way, almost always, when enqueuers reach the end of a buffer, the next buffer is already available for them without any contention.

We have implemented \MYQUEUE{} and evaluated its performance in comparison with three other leading lock-free and wait-free implementations, namely WFqueue~\cite{yang2016wait}, CCqueue~\cite{fatourou2012revisiting}, and MSqueue~\cite{MS96}.
We also examined the memory requirements for the data and code of all measured implementations using valgrind~\cite{nethercote2003valgrind}.
The results indicate that \MYQUEUE{} is up to $50$\% faster than WFqueue and roughly $10$ times times faster than CCqueue and MSqueue.
\MYQUEUE{} is also more scalable than the other queue structures we tested, enabling more than $20$ million operations per second even with 128 threads.
Finally, the memory footprint of \MYQUEUE{} is roughly $90$\% better than its competitors in the tested workloads, and provides similar benefits in terms of number of cache and heap accesses.
\MYQUEUE{} obtains better performance since the size of each queue node is much smaller and there are no auxiliary data structures.
For example, in WFqueue, which also employs a linked list of arrays approach, each node maintains two pointers, there is some per-thread meta-data, the basic slow-path structure (even when empty), etc.
Further, WFqueue employs a lazy reclamation policy, which according to its authors is significant for its performance.
Hence, arrays are kept around for some time even after they are no longer useful. 
In contrast, the per-node meta-data in \MYQUEUE{} is just a 2-bit flag, and arrays are being freed as soon as they become empty.
This translates to a more (hardware) cache friendly access pattern (as is evident in Tables~\ref{tab:2threads} and~\ref{tab:128threads}).
Also, in \MYQUEUE{} dequeue operations do not invoke any atomic (e.g., FAA \& CAS) operations at all.

\section{Related Work}
\label{sec:related}
Implementing concurrent queues is a widely studied topic~\cite{alistarh2015spraylist,AH90,DDGZ18,DBLP:conf/usenix/HedayatiSSM19,liu2007active,SLS06,DBLP:conf/cf/StratiGSGK19,vandierendonck2013deterministic}.
Below we focus on the most relevant works.

\noindent\textbf{Multi-Multi Queues:}
One of the most well known lock-free queue constructions was presented by Michael and Scott~\cite{MS96}, aka \emph{MSqueue}.
It is based on a singly-linked list of nodes that hold the enqueued values plus two references to the head and the tail of the list.
Their algorithm does not scale past a few threads due to contention on the queue's head and tail.

Kogan and Petrank introduced a wait-free variant of the MSqueue~\cite{kogan2011wait}.
Their queue extends the helping technique already employed by Michael and Scott to achieve wait freedom with similar performance characteristics.
They achieve wait-freedom by assigning each operation a dynamic age-based priority and making threads with younger operations help older operations to complete through the use of another data structure named a \emph{state}~array. 

Morrison and Afek proposed LCRQ~\cite{morrison2013fast}, a non-blocking queue based on a linked-list of circular ring segments, CRQ for short. 
LCRQ uses FAA to grab an index in the CRQ.
Enqueue and dequeue operations in~\cite{morrison2013fast} involve a double-width compare-and-swap (CAS2).

Yang and Mellor-Crummey proposed WFqueue, a wait free queue based on FAA~\cite{yang2016wait}.
WFqueue utilizes a linked-list of fixed size segments.
Their design employs the fast-path-slow-path methodology~\cite{kogan2012methodology} to transform a FAA based queue into a wait-free queue.
That is, an operation on the queue first tries the fast path implementation until it succeeds or the number of failures exceeds a threshold.
If necessary, it falls back to the slow-path, which guarantees completion within a finite number of attempts.
Each thread needs to register when the queue is started, so the number of threads cannot change after the initialization of the queue.
In contrast, in our queue a thread can join anytime during the run.

Fatourou and Kallimanis proposed CCqueue~\cite{fatourou2012revisiting}, a blocking queue that uses combining.
In CCqueue, a single thread scans a list of pending operations and applies them to the queue.
Threads add their operations to the list using SWAP. 

Tsigas and Zhang proposed a non-blocking queue that allows the head and tail to lag at most $m$ nodes behind the actual head and tail of the queue~\cite{tsigas2001simple}.
Additional cyclic array queues are described in~\cite{giacomoni2008fastforward,shafiei2009non}.
Recently, a lock-free queue that extends MSqueue~\cite{MS96} to support batching operations was presented in~\cite{DBLP:conf/spaa/MilmanKLLP18}.

\noindent\textbf{Limited Concurrency Queues:}
David proposed a sublinear time wait-free queue~\cite{david2004single} that supports multiple dequeuers and one enqueuer.
His queue is based on infinitely large arrays.
The author states that he can bound the space requirement, but only at the cost of increasing the time complexity to \emph{O(n)}, where \emph{n} is the number of dequeuers.

Jayanti and Petrovic proposed a wait-free queue implementation supporting multiple enqueuers and one concurrent dequeuer~\cite{jayanti2005logarithmic}.
Their queue is based on a binary tree whose leaves are linear linked lists. 
Each linked list represents a ``local'' queue for each thread that uses the queue. 
Their algorithm keeps one local queue at each process and maintains a timestamp for each element to decide the order between the elements in the different queues.

\section{Preliminaries}
\label{sec:prelim} 

We consider a standard shared memory setting with a set of threads accessing the shared memory using only the following atomic~operations:
\begin{itemize}
	\item Store - Atomically replaces the value stored in the target address with the given value.
	\item Load - Atomically loads and returns the current value of the target address.
	\item CAS - Atomically compares the value stored in the target address with the expected value. If those are equal, replaces the former with the desired value and the Boolean value \texttt{true} is returned. Otherwise, the shared memory is unchanged and \texttt{false} is returned.
	\item FAA -Atomically adds a given value to the value stored in the target address and returns the value in the target address held previously.
\end{itemize}

We assume that a program is composed of multiple such threads, which specifies the order in which each thread issues these operations and the objects on which they are invoked.
In an execution of the program, each operation is invoked, known as its \emph{invocation} event, takes some time to execute, until it terminates, known as its \emph{termination} event.
Each termination event is associated with one or more values \emph{being returned by that operation}.
An execution is called \emph{sequential} if each operation invocation is followed immediately by the same operation's termination (with no intermediate events between them).
Otherwise, the execution is said to be \emph{concurrent}.
When one of the operations in an execution $\sigma$ is being invoked on object $x$, we say that $x$ is being \emph{accessed} in $\sigma$.

Given an execution $\sigma$, we say that $\sigma$ induces a (partial) \emph{real-time ordering} among its operations:
Given two operations $o_1$ and $o_2$ in $\sigma$, we say that $o_1$ \emph{appears before} $o_2$ in $\sigma$ if the invocation of $o_2$ occurred after the termination of $o_1$ in  $\sigma$.
If neither $o_1$ nor $o_2$ appears before the other in $\sigma$, then they are considered \emph{concurrent operations} in $\sigma$.
Obviously, in a sequential execution the real-time ordering is a total order.

Also, we assume that each object has a \emph{sequential specification}, which defines the set of allowed sequential executions on this object.
A sequential execution $\sigma$ is called \emph{legal} w.r.t. a given object $x$ if the restriction of this execution to operations on $x$ (only) is included in the sequential specification of $x$.
$\sigma$ is said to be legal if it is legal w.r.t. any object being accessed in $\sigma$.
Further, we say that two executions $\sigma$ and $\sigma`$ are \emph{equivalent} if there is a one-to-one mapping between each operation in $\sigma$ to an operation in $\sigma'$ and the values each such pair of operations return in $\sigma$ and $\sigma'$ are the same.

\begin{definition}[Linearizability]
An execution $\sigma$ is \emph{linearizable}~\cite{herlihy1990linearizability} if it is equivalent to a legal sequential execution $\sigma'$ and the order of all operations in $\sigma'$ respects the real-time order of operations in $\sigma$.
\end{definition}

\section{MPSC Queue}
\label{sec:mpscqueue}

One of the problems of multi producer multi consumer queues is that any wait-free implementation of such queues requires utilizing helper data structures~\cite{help}, which are both memory wasteful and slow down the rate of operations.
By settling for single consumer support, we can avoid helper data structures.
Further, for additional space efficiency, we seek solutions that minimize the use of pointers, as each pointer consumes $64$ bits on most modern architectures.
The requirement to support unbounded queues is needed since there can be periods in which the rate of enqueue operations surpasses the rate of dequeues.
Without this property, during such a period some items might be dropped, or enqueuers might need to block until enough existing items are dequeued.
In the latter case the queue is not wait-free.

\subsection{Overview}

Our queue structure is composed of a linked list of buffers.
A buffer enables implementing a queue without pointers, as well as using fast \emph{FAA} instructions to manipulate the head and tail indices of the queue.
However, a single buffer limits the size of the queue, and is therefore useful only for bounded queue implementations.
In order to support unbounded queues, we extend the single buffer idea to a linked list of buffers.
This design is a tradeoff point between avoiding pointers as much as possible while supporting an unbounded queue.
It also decreases the use of \emph{CAS} to only adding a new buffer to the queue, which is performed~rarely.

Once a buffer of items has been completely read, the (sole) consumer deletes the buffer and removes it from the linked list.
Hence, deleting obsolete buffers requires no synchronization.
When inserting an element to the queue, if the last buffer in the linked list has room, we perform an atomic \emph{FAA} to a \emph{tail} index and insert the element.
Otherwise, the producer allocates a new buffer and tries to insert it to the linked list via an atomic \emph{CAS}, which keeps the overall size of the queue small while supporting unbounded sizes\footnote{In fact, to aviod contention the new buffer is usually already allocated earlier on; more accurate details appear below.}.
The structure of a \MYQUEUE{} queue is depicted in Figure~\ref{fig:queue}.
\begin{figure}[t]
	\center{
		\includegraphics[width =0.55\columnwidth]{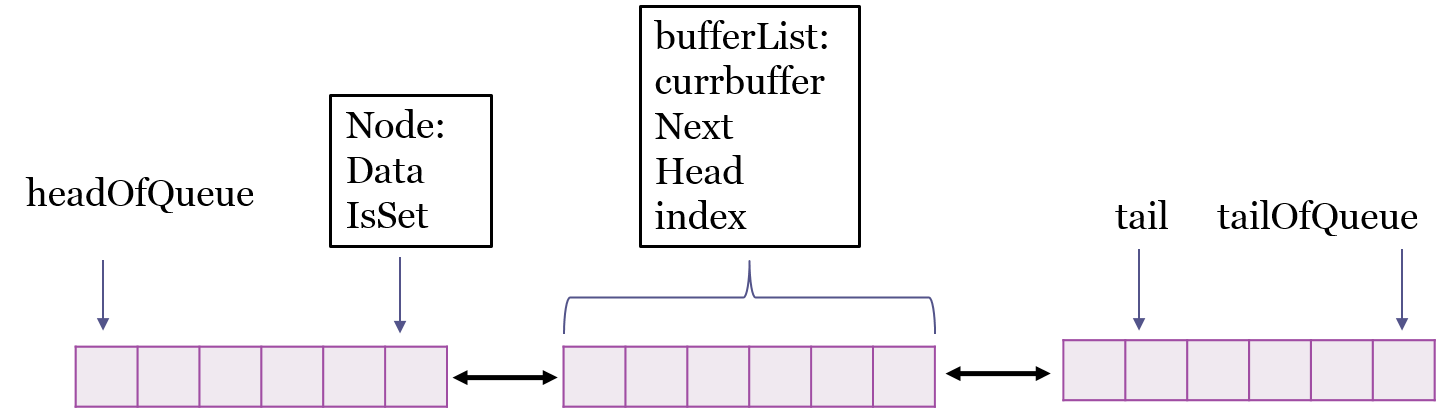}
	}
	\caption{A node in \MYQUEUE{} consists of two fields: the data itself and an \emph{IsSet} field that notifies when the data is ready to be read.
		The \emph{BufferList} consists of $5$ fields: \emph{Currbuffer} is an array of nodes, a \emph {Next} and \emph{Prev} pointers, \emph{Head} index pointing to the last place in that buffer that the consumer read, and \emph{PositionInQueue} that tracks the location of the \emph{BufferList} in the list.
		The \emph{Tail} index indicates the last place to be written to by the producers.
		\emph{HeadOfQueue} is the consumer pointer for the first buffer; once the head reaches the end of the buffer, the corresponding \emph{BufferList} is deleted.
		\emph{TailOfQueue} points to the last BufferList in the linked list; once the \emph{Tail} reaches the end of this buffer a new array BufferList is added.
	}
	\label{fig:queue}
\end{figure}

\ifdefined\COVER
Yet, the queue as described so far is not \emph{linearizable}.
For example, consider two concurrent enqueue operations $enq_1$ and $enq_2$ for items $i_1$ and $i_2$ respectively, where $enq_1$ terminates before $enq_2$.
Also, assume a dequeue operation that overlaps only with $enq_2$ (it starts after $enq_1$ terminates).
It is possible that item $i_2$ is inserted at an earlier index in the queue than the index of $i_1$ because this is the order by which the \emph{FAA} instructions of $enq_1$ and $enq_2$ executed.
Yet, the insertion of $i_1$ terminates quickly while the insertion of $i_2$ (into the earlier index) takes longer.
Now the dequeue operation sees that the \emph{head} of the queue is still empty, so it returns immediately with an empty reply.
But since $enq_1$ terminated before the dequeue started, this violates linearizability.
\else
Figure~\ref{fig:linearizable} exhibits why the naive proposal for the queue violates linearizability.
This scenario depicts two concurrent enqueue operations $enqueue_1$ and $enqueue_2$ for items $i_1$ and $i_2$ respectively, where $enqueue_1$ terminates before $enqueue_2$.
Also, assume a dequeue operation that overlaps only with $enqueue_2$ (it starts after $enqueue_1$ terminates).
It is possible that item $i_2$ is inserted at an earlier index in the queue than the index of $i_1$ because this is the order by which their respective \emph{FAA} instructions got executed.
Yet, the insertion of $i_1$ terminates quickly while the insertion of $i_2$ (into the earlier index) takes longer.
Now the dequeue operation sees that the \emph{head} of the queue is still empty, so it returns immediately with an empty reply.
But since $enqueue_1$ terminates before the dequeue starts, this violates linearizability.
\begin{figure}[H]
	\center{
		\includegraphics[width = 0.22\textwidth]{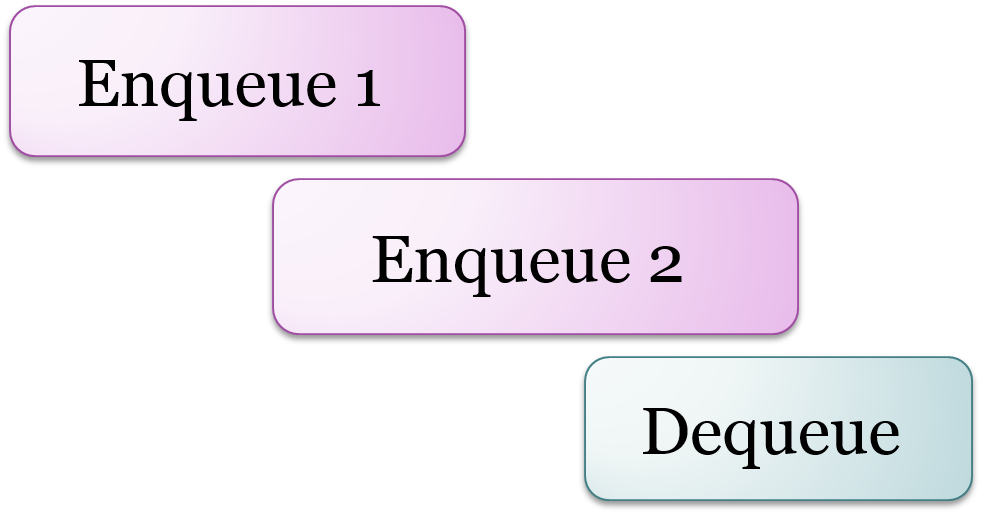}
	}
	\caption{A linearizability problem in the basic queue idea.}
	\label{fig:linearizable}
\end{figure}
\fi
Yet, the queue as described so far is not \emph{linearizable}, as exemplified in Figure~\ref{fig:linearizable}.
To make the queue linearizable, 
if the dequeuer finds that the \emph{isSet} flag is not \texttt{set} (nor \texttt{handled}, to be introduced shortly), as illustrated in Figure~\ref{fig:handeld1}, the dequeuer continues to scan the queue until either reaching the \emph{tail} or finding an item $x$ at position $i$ whose \emph{isSet} is \texttt{set} but the \emph{isSet} of its previous entry is not, as depicted in Figure~\ref{fig:handeld2}.
In the latter case, the dequeuer verifies that the \emph{isSet} of all items from the head until position $i$ are still neither \texttt{set} nor \texttt{handled} and then returns $x$ and sets the \emph{isSet} at position $i$ to \texttt{handled} (to avoid re-dequeuing it in the future).
This double reading of the \emph{isSet} flag is done to preserve linearizability in case an earlier enqueuer has terminated by the time the newer item is found.
In the latter case the dequeue should return the earlier item to ensure correct ordering.
This becomes clear in the correctness~proof.

\begin{figure}[t]
\begin{subfigure}{0.5\textwidth}
	\center {
		\includegraphics[width = 0.5\columnwidth]{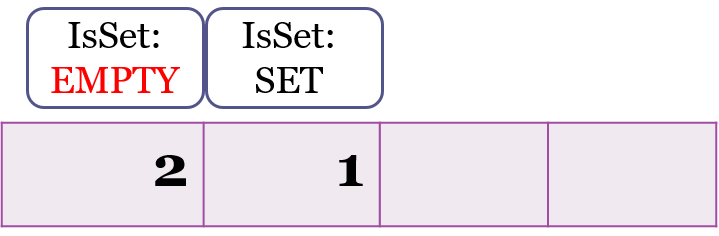}	
	}
	\caption{
		The dequeuer finds that the \emph{isSet} flag for the element pointed by the head is Empty.
	}
	\label{fig:handeld1}
\end{subfigure}
\begin{subfigure}{0.5\textwidth}
	\center {
		\includegraphics[width = 0.5\columnwidth]{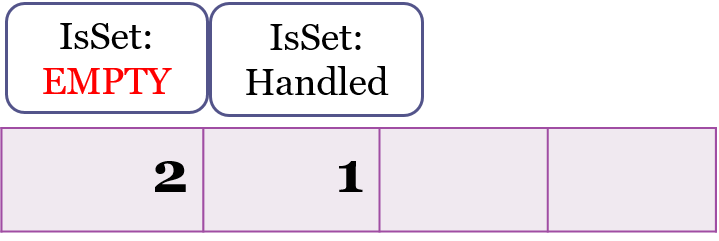}	
	}
	\caption{
		The dequeuer found an item whose \emph{isSet} is ready, the \emph{isSet} of the entry pointed by head is not set. It removes this item and marks its \emph{isSet} as \texttt{handled}.
	}
	\label{fig:handeld2}
\end{subfigure}
\caption{An example of the solution for the basic queue's linearizability problem.}
\label{fig:lin-problem-solution}
\end{figure}

 \begin{algorithm}[t]
	\caption{Internal \MYQUEUE{} classes and fields}
	\label{alg:vars}
	\scriptsize
	\begin{algorithmic}[1]
		\State class Node $\{$ \label{line:kuku}
		\State $\ \ T\ data;$
		\State 	$\ \ atomic<State> \  isSet;\}$
		\State  $// \ isSet $ has three values: empty, set and handled.
		\State class bufferList $\{$
		\State $\ \  Node* \ currBuffer;$
		\State 	$\ \ atomic<bufferList*>\  next;$
		\State $\ \ bufferList*\ prev;$
		\State $\ \ unsigned\ int\ head;$
		\State $\ \ unsigned\ int\ positionInQueue;\}$		
		\State $bufferList*\ headOfQueue;$
		\State $atomic<bufferList*>\ tailOfQueue;$
		\State $atomic<unsigned\ int>\ tail;$		
	\end{algorithmic}
\end{algorithm}


\subsection{\MYQUEUE{} Queue Algorithm}

\subsubsection{Internal \MYQUEUE{} classes and fields}
Algorithm~\ref{alg:vars} depicts the internal classes and variables.
The \emph{Node} class represents an item in the queue.
It holds the data itself and a flag to indicate whether the data is ready to be read (\emph{isSet}=\texttt{set}), the node is empty or is in inserting process (\emph{isSet}=\texttt{empty}), or it was already read by the dequeuer (\emph{isSet}=\texttt{handled}). 
Every node in the queue starts with \emph{isSet}=\texttt{empty}.
 
The \emph{bufferList} class represents each buffer in the queue:
\emph{currbuffer} is the buffer itself -- an array of nodes;
\emph{next} is a pointer to the next buffer -- it is atomic as several threads (the enqueuers) try to change it simultaneously;
\emph{prev} is a pointer to the previous buffer -- it is not concurrently modified and therefore not atomic;
\emph{head} is an index to the first location in the buffer to be read by the dequeuer -- it is changed only by the single dequeuer thread;
\emph{positionInQueue} is the position of the buffer in the queue -- it is used to calculate the amount of items in the queue up until this buffer, it never changes once set and is initialized to~1.

The rest are fields of the queue:
\emph{headOfQueue} points to the head buffer of the queue. 
The dequeuer reads from the buffer pointed by \emph{headOfQueue} at index \emph{head} in that buffer.
It is only changed by the single threaded dequeuer.
\emph{tailOfQueue} points to the last queue's buffer.
It is atomic as it is modified by several threads.
\emph{tail}, initialized to 0, is the index of the last queued item. 
All threads do FAA on tail to receive a location of a node to insert~into.

\begin{algorithm}[t]
	\caption{Enqueue operation}
	\label{alg:enqueue_high}
	\scriptsize
	\begin{algorithmic}[1]
	\Function {enqueue}{$data$}
	\State{$location=FAA(tail);$} \label{alg:HFAA}
		
	\While{the $location$ is in an unallocated buffer }\label{alg:Hnext}
	\State{allocate a new buffer and try to adding it to the queue with a CAS on $tailOfQueue$}	
	\If{unsuccessful}
	\State{delete the allocated buffer and move to the new buffer}\label{alg:Hdelete}

	\EndIf	
	\EndWhile
	
		\State{bufferList* tempTail = tailOfQueue}

		\While{the $location$ is not in the buffer pointed by $tempTail$  } \label{alg:Hback}
			\State{tempTail = tempTail$\rightarrow$ prev}
			\EndWhile

		\State{//location is in this buffer }
			\State{adjust location to its corresponding index in $tempTail$ } \label{line:Hadjust}
			\State{$tempTail[location ].data=data$} \label{line:enq-location}
			\State{$tempTail[location ].isSet=\mathtt{set}$} \label{line:enq-set}
			\If{$location$ is the second entry of the last buffer}\label{alg:Hindexone}
				\State{allocate a new buffer and try adding it with a CAS on $tailOfQueue$; if unsuccessful, delete this buffer}	
			\EndIf

	\EndFunction
		
	\end{algorithmic}
	
\end{algorithm}

\subsubsection{The Enqueue Operation}
A high-level pseudo-code for enqueue operations appears in Algorithm~\ref{alg:enqueue_high}; a more detailed listing appears in Appendix~\ref{sec:detailedcode}.
The enqueue operation can be called by multiple threads acting as producers.
The method starts by performing FAA on \emph{Tail} to acquire an index in the queue (Line~\ref{alg:HFAA}).
When there is a burst of enqueues the index that is fetched can be in a buffer that has not yet been added to the queue, or a buffer that is prior to the last one.
Hence, the enqueuer needs to find in which buffer it should add its new item.

If the index is beyond the last allocated buffer, the enqueuer allocates a new buffer and tries to add it to the queue using a \emph{CAS} operation (line~\ref{alg:Hnext}).
If several threads try simultaneously, one \emph{CAS} will succeed and the other enqueuers' \emph{CAS} will fail.
Every such enqueuer whose \emph{CAS} fails must delete its buffer (line~\ref{alg:Hdelete}).
If the \emph{CAS} succeeds, the enqueuer moves \emph{tailOfQueue} to the new buffer. 
If there is already a next buffer allocated then the enqueuer only moves the \emph{tailOfQueue} pointer through a CAS operation.

If the index is earlier in the queue compared to the tail index (line~\ref{alg:Hback}), the enqueuer retracts to the previous buffer.
When the thread reaches the correct buffer, it stores the data in the buffer index it fetched at the beginning  (line~\ref{line:enq-location}), marks the location as \texttt{Set} (line~\ref{line:enq-set}) and finishes.
Yet, just before returning, if the thread is in the last buffer of the queue and it obtained the second index in that buffer, the enqueuer tries to add a new buffer to the end of the queue (line~\ref{alg:Hindexone}).
This is an optimization step to prevent wasteful contention prone allocations of several buffers and the deletion of most as will be explain next.

Notice that if we only had the above mechanism, then each time a buffer ends there could be a lot of contention on adding the new buffer, and many threads might allocate and then delete their unneeded buffer.
This is why the enqueuer that obtains the second entry in each buffer already allocates the next one.
With this optimization, usually only a single enqueuer tries to allocate such a new buffer and by the time enqueuers reach the end of the current buffer a new one is already available to them.
On the other hand, we still need the ability to add a new buffer if one is not found (line~\ref{alg:Hnext}) to preserve wait-freedom.



\begin{algorithm}[t]
	\caption{Dequeue operation}
	\label{alg:dequeue_high}
	\scriptsize
	\begin{algorithmic}[1]
		\Function {dequeue}{}		
		\State {mark the element pointed by head as \emph{n}} 
		\State{//Skip to the first non-\texttt{handled} element (due to the code below, it might not be pointed by head!)}
		\While{\emph{n}.isSet == \texttt{handled}} \label{alg:HHandled}
			\State {advance n and head to the next element} 
				
		\If{the entire buffer has been read}
			\State {move to the next buffer if exists and delete the previous buffer}\label{alg:Hmove}
		\EndIf	
			
		\EndWhile
		\If{queue is empty} \label{alg:Hempty}
			\State {return false} 
		\EndIf
		
		\State {// If the queue is not empty, but its first element is, there might be a Set element further on -- find it}
		\If{\emph{n}.isSet == \texttt{empty}}\label{alg:Hwhile}
			\For {(\emph{tempN} = \emph{n}; \emph{tempN}.isSet != \texttt{set} and not end of queue; advance \emph{tempN} to next element)} \label{alg:HforN}
				\If{the entire buffer is marked with \texttt{handled}}
					\State {``fold'' the queue by deleting this buffer and move to the next buffer if exists}\label{alg:Hfold}
					\State \Comment{Notice comment about a delicate garbage collection issue in the description text}
				\EndIf	
			\EndFor
			\If{reached end of queue} \label{alg:HendQ}
				\State {return false}
			\EndIf
		\EndIf
			\State {// Due to concurrency, some element between \emph{n} and \emph{tempN} might have been \texttt{set} -- find it}
			\For{(\emph{e} = n; \emph{e} = \emph{tempN} or \emph{e} is before \emph{tempN}; advance \emph{e} to next element)}\label{alg:forEach}
			 \If {\emph{e} is not \emph{n} and \emph{e}.isSet == \texttt{set}}
			 	\State { \emph{tempN} =\emph{e} and restart the for loop again from \emph{n} } \label{alg:Hscan}
		 	\EndIf
		 	\EndFor
		 		\State {// we scanned the path from head to \emph{tempN} and did not find a prior \texttt{set} element - remove \emph{tempN}  }
				\State{\emph{tempN}.isSet= \texttt{handled}} \label{alg:HremoveN}
				\If{the entire buffer has been read}
					\State {move to the next buffer if exists and delete the previous buffer}\label{alg:Hmove2}
				\ElsIf {\emph{tempN} = \emph{n}} \label{alg:HadvanceHead}
					\State{advance head}
				\EndIf
				\State{return \emph{tempN}.data}
%
	
%
%
%
%
%
%
		\EndFunction
	\end{algorithmic}
\end{algorithm}

\begin{figure}[t]
	\center{
		\includegraphics[width =1\textwidth]{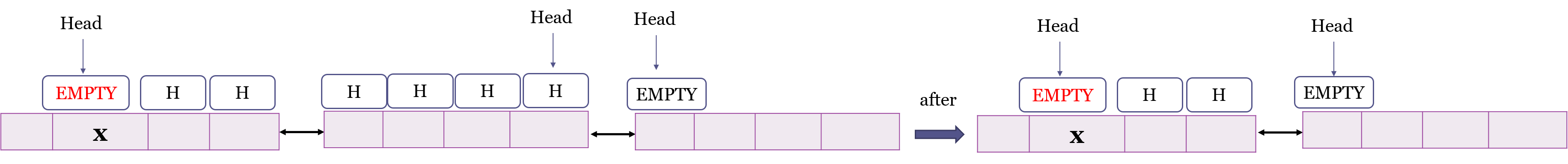}	
	}
	\caption{``folding'' the queue - A thread fetches an index in the queue and stalls before completing the enqueue operation. Here we only keep the buffer that contains this specific index and delete the rest.
		H stands for \emph{isSet} = \texttt{handled} and EMPTY stands for \emph{isSet} = \texttt{empty}.
	}
	\label{fig:remove2}
\end{figure}

\subsubsection{The Dequeue Operation}
A high-level pseudo-code for dequeue operations appears in Algorithm~\ref{alg:dequeue_high}; a more detailed listing appears in Appendix~\ref{sec:detailedcode}.
A dequeue operation is called by a single thread, the consumer.
A dequeue starts by advancing the head index to the first element not marked with \textit{isSet} = \texttt{handled} as such items are already dequeued (line~\ref{alg:HHandled}).
If the consumer reads an entire buffer during this stage, it deletes it (line~\ref{alg:Hmove}).
At the end of this scan, the dequeuer checks if the queue is empty and if so returns false (line~\ref{alg:Hempty}).

Next, if the first item is in the middle of an enqueue process (\textit{isSet}=\texttt{empty}), the consumer scans the queue (line~\ref{alg:HforN}) to avoid the linearizability pitfall mentioned above.
If there is an element in the queue that is already \texttt{set} while the element pointed by the head is still \texttt{empty}, then the consumer needs to dequeue the latter item (denoted \textit{tempN} in line~\ref{alg:HforN}).

During the scan, if the consumer reads an entire buffer whose cells are all marked \texttt{handled}, the consumer deletes this buffer (line~\ref{alg:Hfold}).
We can think of this operation as ``folding'' the queue by removing buffers in the middle of the queue that have already been read by the consumer. 
Hence, if a thread fetches an index in the queue and stalls before completing the enqueue operation, we only keep the buffer that contains this specific index and may delete all the rest, as illustrated in Figure~\ref{fig:remove2}.

Further, before dequeuing, the consumer scans the path from head to \textit{tempN} to look for any item that might have changed it status to \texttt{set} (line~\ref{alg:Hscan}).
If such an item is found, it becomes the new \textit{tempN} and the scan is restarted.

Finally, a dequeued item is marked \texttt{handled} (line~\ref{alg:HremoveN}), as also depicted in Figure~\ref{fig:handeld2}.
Also, if the dequeued item was the last non-\texttt{handled} in its buffer, the consumer deletes the buffer and moves the head to the next one (line~\ref{alg:HadvanceHead}).

There is another delicate technical detail related to deleting buffers in non-garbage collecting environments such as the run-time of C++.
For clarity of presentation and since it is only relevant in non-garbage collecting environments, the following issue is not addressed in Figure~\ref{alg:dequeue_high}, but is rather deferred to the detailed description in Figure~\ref{alg:dequeue} at the Appendix.
Specifically, when preforming the fold, only the array is deleted, which consumes the most memory, while the much smaller meta-data structure of the array is transferred to a dedicated garbage collection list.
The reason for not immediately deleting the the entire array's structure is to let enqueuers, who kept a pointer for tailOfQueue point to a valid bufferList at all times, with the correct prev and next pointers, until they are done with it.
The exact details of this garbage collection mechanism are discussed in Appendix~\ref{sec:detailedcode}.

\ifdefined\COVER
\else
\subsubsection{Memory Buffer Pool Optimization}
\label{sec:buffer-pool}
Instead of always allocating and releasing buffers from the operating system, we can maintain a buffer pool.
This way, when trying to allocate a buffer, we first check if there is already a buffer available in the buffer pool.
If so, we simply claim it without invoking an OS system call.
Similarly, when releasing a buffer, rather than freeing it with an OS system call, we can insert it into the buffer pool.
It is possible to have a single shared buffer pool for all threads, or let each thread maintain its own buffer pool.
This optimization can potentially reduce execution time at the expense of a somewhat larger memory heap area.
The code used for \MYQUEUE's performance measurements does \emph{not} employ this optimization.
In \MYQUEUE, allocation and freeing of buffers are in any case a relatively rare event.
\fi

\section{Correctness}
\label{sec:proof}

\subsection{Linearizability}

To prove linearizability, we need to show that for each execution $\sigma$ that may be generated by \MYQUEUE, we can find an equivalent legal sequential execution that obeys the real-time order of operations in $\sigma$.
We show this by constructing such an execution.
Further, for simplicity of presentation, we assume here that each value can be enqueued only once.
Consequently, it is easy to verify from the code that each value can also be returned by a dequeue operation at most once.
In summary we have:

\begin{observation}
	\label{obs:once}
	Each enquequed value can be returned by a dequeue operation at most once.
\end{observation}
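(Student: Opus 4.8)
The plan is to argue directly from the code, leaning on two facts: the standing assumption that each value is enqueued at most once, and the fact that \MYQUEUE{} has a single consumer. First I would note that, by the single-enqueue assumption, a given value $v$ occupies the \emph{data} field of a unique \emph{Node} object $n$, residing at a unique index of a unique \emph{bufferList}; hence ``returning $v$'' is synonymous with ``returning $n.data$'', and it suffices to show that no two dequeue operations return this particular node.

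Next I would track the possible transitions of $n.isSet$. It is initialized to \texttt{empty}; the only writer that can move it to \texttt{set} is the (unique) enqueuer of $v$, on line~\ref{line:enq-set}; and the only writer that can move it to \texttt{handled} is the consumer, on line~\ref{alg:HremoveN}. No line of code ever overwrites an \emph{isSet} field that already holds \texttt{handled}, so once $n.isSet = \texttt{handled}$ it remains \texttt{handled} for the rest of the execution. Then I would invoke the single-consumer property: all dequeue invocations are totally ordered in any execution $\sigma$, so it is enough to consider the first dequeue $d$ that returns $v$ (if one exists) and show that no later dequeue $d'$ returns it. Inspecting the dequeue code, a node is returned only as \emph{tempN} on line~\ref{alg:HremoveN}, and immediately before that return $d$ sets $n.isSet \gets \texttt{handled}$. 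For a subsequent dequeue $d'$ to also return $v$, it would have to select $n$ as its own \emph{tempN}; but the head-advancing loop on line~\ref{alg:HHandled} skips \texttt{handled} nodes outright, and every loop that can assign a node to \emph{tempN} (the scan on line~\ref{alg:HforN}, and the re-scan on lines~\ref{alg:forEach}--\ref{alg:Hscan}) advances only past \texttt{handled} entries or upon encountering a \texttt{set} entry — never selecting a \texttt{handled} one. Since $n.isSet = \texttt{handled}$ throughout the execution of $d'$, node $n$ is never chosen, and if $n$'s buffer has already been folded or deleted by then, $d'$ never even encounters $n$; either way $d'$ does not return $v$. Combining, $v$ is returned at most once.

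The hard part — and it is a mild one, consistent with the observation's ``it is easy to verify from the code'' — is simply being exhaustive about the code paths: one has to enumerate every place where \emph{n}, \emph{tempN}, or \emph{e} is advanced and every place where an \emph{isSet} field is written, including the folding and the garbage-collection bookkeeping deferred to Appendix~\ref{sec:detailedcode}, and confirm that none of them can resurrect a \texttt{handled} entry into \texttt{set}/\texttt{empty} or cause the same node to be returned a second time. Once that case analysis is laid out, the observation follows immediately.
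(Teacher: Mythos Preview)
Your argument is correct and is exactly the kind of code inspection the paper has in mind: it states only that ``it is easy to verify from the code'' and gives no further proof, so your expansion---tracking the \texttt{empty}$\to$\texttt{set}$\to$\texttt{handled} transitions, using the total order on dequeues from the single-consumer assumption, and checking that every scan/rescan path in Algorithm~\ref{alg:dequeue_high} skips \texttt{handled} nodes---is precisely the verification the authors leave to the reader.
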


\begin{theorem}
	The \MYQUEUE{} queue implementation is linearizable.
\end{theorem}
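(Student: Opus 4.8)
The plan is to define a linearization point for every operation and then show that the induced sequential execution is legal and respects real-time order.

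\textbf{Linearization points.} For an enqueue, the natural candidate is the FAA on \emph{tail} (Line~\ref{alg:HFAA}) — this assigns it a unique slot index. For a dequeue, the choice is more delicate. When a dequeue returns \texttt{false}, I would linearize it at the point where it reads that the queue is empty (either Line~\ref{alg:Hempty} or Line~\ref{alg:HendQ}). When a dequeue returns a value \emph{tempN}, I would linearize it at the successful write \emph{tempN}.isSet $=$ \texttt{handled} (Line~\ref{alg:HremoveN}). The subtlety is that a returned item may not be the one at the smallest slot index; the double-scan logic (Lines~\ref{alg:forEach}–\ref{alg:Hscan}) guarantees that at the moment of the final scan, every slot strictly between \emph{head} and \emph{tempN} is observed to be \texttt{empty} or \texttt{handled}. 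I would first establish, as a preliminary lemma, that the sequence of returned values over all dequeues is exactly a prefix-consistent ordering: define, for each successfully dequeued value, its slot index, and show the consumer returns values in an order that, together with the \texttt{handled} marks it leaves, is consistent with a single FIFO queue on slot indices with some slots ``skipped-then-later-covered.'' Observation~\ref{obs:once} already gives uniqueness.

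\textbf{Constructing the sequential execution.} Order all operations by their linearization points (breaking ties arbitrarily, though since dequeues are single-threaded and enqueue LPs are distinct FAAs, ties only occur trivially). I then need to argue this order is (i) a legal queue execution and (ii) respects real-time order. For (ii), real-time order is immediate for enqueues vs.\ enqueues (FAA order), and for any pair where one terminates before the other begins the LP of the first precedes the LP of the second, since each LP lies between the operation's invocation and termination — this must be checked for the dequeue-returns-false case in particular, which is exactly the scenario the \texttt{handled} mechanism was designed to fix, so I would lean on it here. For (i), I must show: every returned value was previously enqueued with an earlier LP (true, since \texttt{set} is only written after the data write, which follows the FAA); each value returned at most once (Observation~\ref{obs:once}); a dequeue returns \texttt{false} only when all enqueues with earlier LP have already been matched by earlier dequeues; and the FIFO property — the $k$-th successful dequeue returns the value with the $k$-th smallest slot index among not-yet-returned enqueued values, \emph{or} legitimately returns \texttt{false}. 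The buffer-folding and \texttt{handled}-skipping steps (Lines~\ref{alg:HHandled}, \ref{alg:Hfold}, \ref{alg:Hmove}) must be shown to preserve the invariant that \emph{head} (across buffers) always points at or before the smallest non-\texttt{handled}, non-returned slot.

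\textbf{Main obstacle.} The hard part is proving the FIFO/legality property when a dequeue returns an item \emph{tempN} whose slot is strictly larger than that of a concurrently-enqueuing, not-yet-\texttt{set} item at slot $i < $ slot of \emph{tempN}. I must show this is linearizable: the linearization must place that slower enqueue \emph{after} this dequeue. The double scan guarantees that at the LP (Line~\ref{alg:HremoveN}) slot $i$ was still \texttt{empty}, hence the slower enqueue had not yet written \texttt{set}, hence its own LP — wait, its LP is the \emph{FAA}, which already happened. This is precisely the tension: the FAA ordering would demand slot $i$'s value come out first. The resolution, which I would develop carefully, is that because slot $i$ is \texttt{empty} at the dequeue's LP while \emph{tempN} is \texttt{set}, and because that enqueue is still in progress, it is \emph{concurrent} with the dequeue — so real-time order imposes no constraint, and we are free in the sequential order to place the dequeue of \emph{tempN} before the enqueue into slot $i$; subsequent dequeues will then pick up slot $i$. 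I would formalize this by not using raw FAA order as the enqueue linearization point universally, but rather: an enqueue whose \texttt{set} write ``loses'' to a dequeue's scan (i.e., the dequeue passes slot $i$ while it is \texttt{empty}) is linearized at its \texttt{set} write instead of its FAA — and then check this still lies within its execution interval and still respects real-time order against all other operations (in particular, against earlier-terminated enqueues, which is where the scenario in Figure~\ref{fig:linearizable} and the text's explanation comes in — the ``previous entry not set'' condition at Line~\ref{alg:HforN} is what rules out the bad case). Verifying that this relocation of LPs is globally consistent and never creates a real-time violation is the crux of the proof.
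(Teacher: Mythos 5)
Your overall plan (fixed linearization points, with a case-based ``relocation'' of the LP of an enqueue that a dequeue skips while its slot is still \texttt{empty}) is a genuinely different route from the paper, which assigns no linearization points at all: it constructs the sequential execution directly, keeping the dequeues in their (already total) order, ordering the dequeued enqueues by the order in which their values are returned, inserting each such enqueue just before the appropriate dequeue, and appending the never-dequeued enqueues at the end. Unfortunately, the specific resolution you propose for what you correctly identify as the crux has a genuine gap: relocating every ``losing'' enqueue to its \texttt{set} write does not produce a legal queue order. Concretely, let $enq_a$ and $enq_b$ obtain slots $a<b$ via the FAA of line~\ref{alg:HFAA}, let a dequeue pass both slots while \texttt{empty} and return the value of a later \texttt{set} slot $c>b$ (marking it \texttt{handled} at line~\ref{alg:HremoveN}), and then let $enq_b$ perform its \texttt{set} write \emph{before} $enq_a$ does. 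Your rule relocates both enqueues, so in LP order $enq_b$ precedes $enq_a$; but the consumer scans from the head, so its subsequent dequeues return slot $a$'s value before slot $b$'s. The induced history contains $enq(v_b),\,enq(v_a),\,deq{\rightarrow}v_a$, which is not FIFO-legal. The same failure occurs between a relocated enqueue and an unrelocated one whose FAA precedes the relocated enqueue's \texttt{set} write: neither the FAA times nor the \texttt{set}-write times agree in general with the order in which the consumer eventually returns the values, and legality forces the enqueue order to follow the latter.

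The repair is essentially to abandon fixed code-point LPs for these enqueues and order them by the order in which their values are dequeued --- which is exactly the paper's construction; its Claim~\ref{claim:one} then shows (using the rescan of line~\ref{alg:forEach}) that this dequeue-induced order never contradicts real time, and Sub-Claim~\ref{claim:four} handles the legality of \texttt{empty}-returning dequeues, an obligation you list but do not discharge. Your real-time argument for the single skipped-enqueue case (the skipped enqueue cannot have terminated before the dequeue's invocation, hence they are concurrent and may be reordered) is sound and mirrors the paper's reasoning, but as it stands the proposal fails on the multi-enqueue interleavings above, and it also leaves unaddressed where enqueues whose values are never dequeued are placed relative to concurrent dequeues (the paper's Claim~\ref{claim:five} and the final paragraph of its proof).
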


\begin{proof}
Let $\sigma$ be an arbitrary execution generated by \MYQUEUE.
We now build an equivalent legal sequential execution $\sigma'$.
We start with an empty sequence of operations $\sigma'$ and gradually add to it all operations in $\sigma$ until it is legal and equivalent to $\sigma$.
Each operation inserted into $\sigma'$ is collapsed such that its invocation and termination appear next to each other with no events of other operations in between them, thereby constructing $\sigma'$ to be sequential.

First, all dequeue operations of $\sigma$ are placed in $\sigma'$ in the order they appear in $\sigma$.
Since \MYQUEUE{} only supports a single dequeuer, then in $\sigma$ there is already a total order on all dequeue operations, which is preserved in $\sigma'$.

Next, by the code, a dequeue operation $deq$ that does not return \texttt{empty}, can only return a value that was inserted by an enqueue operation $enq$ that is concurrent or prior to $deq$ (only enqueue operations can change an entry to \texttt{set}).
By Observation~\ref{obs:once}, for each such $deq$ operation there is exactly one such $enq$ operation.
Hence, any ordering in which $enq$ appears before $deq$ would preserve the real-time ordering between these two operations.

Denote the set of all enqueue operations in $\sigma$ by $ENQ$ and let $\widehat{ENQ}$ be the subset of $ENQ$ consisting of all enqueue operations $enq$ such that the value enqueued by $enq$ is returned by some operation $deq$ in $\sigma$.
Next, we order all enqueue operations in $\widehat{ENQ}$ in the order of the dequeue operations that returned their respective value.

\begin{claim}
	\label{claim:one}
The real time order in $\sigma$ is preserved among all operations in $\widehat{ENQ}$.
\end{claim}
\begin{proof}[Proof of Claim~\ref{claim:one}]
Suppose Claim~\ref{claim:one} does not hold.
Then there must be two enqueue operations $enq_1$ and $enq_2$ and corresponding dequeue operations $deq_1$ and $deq_2$ such that the termination of $enq_1$ is before the invocation of $enq_2$ in $\sigma$, but $deq_2$ occurred before $deq_1$.
Denote the entry accessed by $enq_1$ by $in_1$ and the entry accessed by $enq_2$ by $in_2$.
Since the invocation of $enq_2$ is after the termination of $enq_1$, then during the invocation of $enq_2$ the status of $in_1$ was already \texttt{set} (from line~\ref{line:enq-set} in Algorithm~\ref{alg:enqueue_high}).
Moreover, $in_2$ is further in the queue than $in_1$ (from line~\ref{alg:HFAA} in Algorithm~\ref{alg:enqueue_high}).
Since $deq_2$ returned the value enqueued by $enq_2$, by the time $deq_2$ accessed $in_2$ the status of $in_1$ was already \texttt{set}.
As we assumed $deq_1$ is after $deq_2$, no other dequeue operation has dequeueud the value in $in_1$.
Hence, while accessing $in_2$ and before terminating, $deq_2$ would have checked $in_1$ and would have found that it is now \texttt{set} (the loops in lines~\ref{alg:HHandled} and~\ref{alg:forEach} of Algorithm~\ref{alg:dequeue_high} -- this is the reason for line~\ref{alg:forEach}) and would have returned that value instead of the value at $in_2$.
A contradiction.
\end{proof}

To place the enqueue operations of $\widehat{ENQ}$ inside $\sigma'$, we scan them in their order in $\widehat{ENQ}$ (as defined above) from earliest to latest.
For each such operation $enq$ that has not been placed yet in $\sigma$:
($i$) let $deq'$ be the latest dequeue operation in $\sigma$ that is concurrent or prior to $enq$ in $\sigma$ and neither $deq'$ nor any prior dequeue operation return the value enqueued by $enq$, and
($ii$) let $deq''$ be the following dequeue operation in $\sigma$; we add $enq$ in the last place just before $deq''$ in $\sigma'$.
This repeats until we are done placing all operations from $\widehat{ENQ}$ into~$\sigma'$.

\begin{claim}
	\label{claim:two}
	The real-time ordering in $\sigma$ between dequeue operations and enqueue operations in $\widehat{ENQ}$ is preserved in $\sigma'$ as built thus far.
\end{claim}
\begin{proof}[Proof of Claim~\ref{claim:two}]
Since we placed each enqueue operation $enq$ before any dequeue operation whose invocation is after the termination of $enq$, we preserve real-time order between any pair of enqueue and dequeue operations.
Similarly, by construction the relative order of dequeue operations is not modified by the insertion of enqueue operations into $\sigma'$. Thus, real-time ordering is preserved.
\end{proof}

Hence, any potential violation of real time ordering can only occur by placing enqueue operations in a different order (with respect to themselves) than they originally appeared in $\widehat{ENQ}$.
For this to happen, it means that there are two enqueue operation $enq_1$ and $enq_2$ such that $enq_1$ appears before $enq_2$ in $\widehat{ENQ}$ but ended up in the reverse order in $\sigma'$.
Denote $deq'_1$ the latest dequeue operation in $\sigma$ that is prior or concurrent to $enq_1$ and neither $deq'_1$ nor any prior dequeue operation return the value enqueued by $enq_1$ and similarly denote $deq'_2$ for $enq_2$.
Hence, if $enq_2$ was inserted at an earlier location than $enq_1$, then $deq'_2$ is also before $deq'_1$.
But since the order of $enq_1$ and $enq_2$ in $\widehat{ENQ}$ preserves their real time order, the above can only happen if the value enqueued by $enq_2$ was returned by an earlier dequeue than the one returning the value enqueued by $enq_1$.
Yet, this violates the definition of the ordering used to create $\widehat{ENQ}$.

\begin{claim}
	\label{claim:three}
	The constructed execution $\sigma'$ preserves legality.
\end{claim}
\begin{proof}[Proof of Claim~\ref{claim:three}]
By construction, enqueue operations are inserted in the order their values have been dequeued, and each enqueue is inserted before the dequeue that returned its value.
Hence, the only thing left to show is legality w.r.t. dequeue operations that returned \texttt{empty}.

To that end, given a dequeue operation $deq_i$ that returns \texttt{empty}, denote $\#\widehat{deq}_{\sigma',i}$ the number of dequeue operations that did not return \texttt{empty} since the last previous dequeue operation that did return \texttt{empty}, or the beginning of $\sigma'$ of none exists.
Similarly, denote $\#enq_{\sigma',i}$ the number of enqueue operations during the same interval of~$\sigma'$.

\begin{subclaim}
	\label{claim:four}
	For each $deq_i$ that returns empty, $\#\widehat{deq}_{\sigma',i} > \#enq_{\sigma',i}$.
\end{subclaim}
\begin{proof}[Proof of Sub-Claim~\ref{claim:four}]
Recall that the ordering in $\sigma'$ preserves the real time order w.r.t. $\sigma$ and there is a single dequeuer.
Assume by way of contradiction that the claim does not hold, and let $deq_i$ be the first dequeue operation that returned \texttt{empty} while $\#\widehat{deq}_{\sigma',i} \leq \#enq_{\sigma',i}$.
Hence, there is at least one enqueue operation $enq_j$ in the corresponding interval of $\sigma'$ whose value is not dequeued in this interval.
In this case, $deq_i$ cannot be concurrent to $enq_j$ in $\sigma$ since otherwise by construction $enq_j$ would have been placed after it (as it does not return its value).
Hence, $deq_i$ is after $enq_j$ in $\sigma$.
Yet, since each dequeue removes at most one item from the queue, when $deq_i$ starts, the tail of the queue is behind the head and there is at least one item whose state is \texttt{set} between them.
Thus, by lines~\ref{alg:HHandled} and~\ref{alg:forEach} of Algorithm~\ref{alg:dequeue_high}, $deq_i$ would have returned one of these items rather then return \texttt{empty}.
A contradiction.
\end{proof}

With Sub-Claim~\ref{claim:four} we conclude the proof that $\sigma'$ as constructed so far is legal.
\end{proof}

The last thing we need to do is to insert enqueue operations whose value was not dequeued, i.e., all operations in $ENQ \setminus \widehat{ENQ}$.
Denote by $enq'$ the last operation in $\widehat{ENQ}$.
\begin{claim}
	\label{claim:five}
	Any operation $enq'' \in ENQ \setminus \widehat{ENQ}$ is either concurrent to or after $enq'$ in $\sigma$.
\end{claim}
\begin{proof}[Proof of Claim~\ref{claim:five}]
Assume, by way of contradiction, that there is an operation $enq'' \in ENQ \setminus \widehat{ENQ}$ that is before $enq'$ in $\sigma$.
Hence, by the time $enq'$ starts, the corresponding entry of $enq''$ is already \texttt{set} (line~\ref{line:enq-set} of Algorithm~\ref{alg:enqueue_high}) and $enq'$ obtains a later entry (line~\ref{alg:HFAA}).
Yet, since dequeue operations scan the queue from head to tail until finding a \texttt{set} entry (lines~\ref{alg:HHandled},~\ref{alg:HforN}, and~\ref{alg:forEach} in Algorithm~\ref{alg:dequeue_high}), the value enqueued by $enq''$ would have been dequeued before the value of $enq'$.
A contradiction.
\end{proof}

Hence, following Claim~\ref{claim:five}, we insert to $\sigma'$ all operations in $ENQ \setminus \widehat{ENQ}$ after all operations of $\widehat{ENQ}$.
In case of an enqueue operation $enq$ that is either concurrent with or later than a dequeue $deq$ in $\sigma$, then $enq$ is inserted to $\sigma'$ after $deq$.
Among concurrent enqueue operations, we break symmetry arbitrarily.
Hence, real-time order is preserved in~$\sigma'$.

The only thing to worry about is the legality of dequeue operations that returned \texttt{empty}.
For this, we can apply the same arguments as in Claim~\ref{claim:three} and Sub-claim~\ref{claim:four}.
In summary, $\sigma'$ is an equivalent legal sequential execution to $\sigma$ that preserves $\sigma$'s real-time order. 
\end{proof}

\begin{table*}[]
\resizebox{\textwidth}{!}{%
	\begin{tabular}{|l||l||l|l||l|l||l|l||l|l|}
		\hline
		\multirow{10}{*}{} &
		\multicolumn{1}{c||}{Jiffy} &
		\multicolumn{2}{c||}{WF} &
		\multicolumn{2}{c||}{LCRQ}&
		\multicolumn{2}{c||}{CC}&
		\multicolumn{2}{c|}{MS}\\
		& Absolute  & Absolute &Relative & Absolute &Relative & Absolute & Relative & Absolute & Relative \\
		\hline
		Total Heap Usage & 38.70 MB &611.63 MB & x15.80 & 610.95 MB&x15.78 & 305.18 MB & x7.88 &1.192 GB &x31.54 \\
		\hline
		Number of Allocs & 3,095 & 9,793 & x3.16 &1,230 &x0.40 &5,000,015 &x1,615 & 5,000,010 & x1,615 \\
		\hline
		Peak Heap Size & 44.81 MB  & 200.8 MB & x4.48 &612.6 MB &x13.67 &  420.0 MB &x9.37 &  1.229 GB &x28.08 \\
		\hline
		\# of Instructions Executed & 550,416,453 & 5,612,941,764 &x10.20 & 1,630,746,827 &x2.96 & 3,500,543,753 &x6.36 & 1,821,777,428 & x3.31  \\
		\hline
		I1  Misses & 2,162 &1,714 & x0.79 &1,601 &x0.74 & 1,577 & x0.73& 1,636 & x0.76 \\
		\hline
		L3i Misses &2,084 & 1,707 &x0.82 &1,591&x0.76 &1,572 & x0.75 &1,630 &x0.78 \\
		\hline
		Data Cache Tries (R+W)& 281,852,749 & 2,075,332,377 & x7.36 &  650,257,906 &x2.3 & 1,238,761,631 &x4.40 & 646,304,575 & x2.29 \\
		\hline
		D1  Misses & 1,320,401 & 25,586,262 &x19.37 &20,037,956 &x15.17 &15,000,507 &x11.36 & 11,605,064 & x8.79 \\
		\hline
		L3d Misses & 652,194 & 10,148,090 & x15.56 &5,028,204&x7.7 & 14,971,182 & x22.96 & 10,973,055 & x16.82 \\
		\hline
	\end{tabular}
}
	\normalsize
	\caption{
		Valgrind memory usage statistics run with one enqueuer and one dequeuer. I1/D1 is the L1 instruction/data cache respectively while L3i/L3d is the L3 instruction/data cache respectively.
	}
	\label{tab:2threads}

\end{table*}

\subsection{Wait-Freedom}
We show that each invocation of enqueue and dequeue returns in a finite number of steps.

\begin{lemma} \label{lemma-enqueue}
Each enqueue operation in Algorithm~\ref{alg:enqueue_high} completes in a finite number of steps.
\end{lemma}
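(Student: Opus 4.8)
The plan is to walk through the control-flow of Algorithm~\ref{alg:enqueue_high} and argue that every loop in it terminates after a bounded number of iterations, so that the whole operation takes finitely many steps. The operation begins with a single \texttt{FAA} on \emph{tail} (line~\ref{alg:HFAA}), which fixes the index \emph{location} that this enqueuer owns; this is a constant-time step. After that, there are only two loops to worry about: the ``grow forward'' loop of line~\ref{alg:Hnext} and the ``retract backward'' loop of line~\ref{alg:Hback}. Everything after the loops — adjusting the index, writing the data, setting \emph{isSet}, and the optional speculative allocation of the next buffer on line~\ref{alg:Hindexone} — is straight-line code with no loops, hence constant time.

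For the forward loop (line~\ref{alg:Hnext}), the key observation is that \emph{location} is a fixed finite number once the \texttt{FAA} has returned. Each iteration either succeeds in \texttt{CAS}-ing a new buffer onto \emph{tailOfQueue} or observes that some other thread already did so; in both cases the number of allocated buffers strictly increases by (at least) one, and the loop condition ``\emph{location} is in an unallocated buffer'' fails once enough buffers have been appended to cover index \emph{location}. Since each buffer has a fixed positive capacity, this happens after at most $\lceil \emph{location}/\text{(buffer size)}\rceil$ iterations, a finite bound determined at invocation time. I would note here that this is the crucial place where wait-freedom (not merely lock-freedom) is obtained: the enqueuer does not need its own \texttt{CAS} to win — a failed \texttt{CAS} still makes progress for it because some buffer got added — so it cannot be starved by other enqueuers repeatedly beating it.

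The backward loop (line~\ref{alg:Hback}) starts from \emph{tempTail} $=$ \emph{tailOfQueue} and repeatedly follows \emph{prev} pointers until it reaches the buffer containing \emph{location}. Since \emph{location} lies in some already-allocated buffer (guaranteed upon exiting the forward loop) and each step moves one buffer closer to the head, this loop runs for at most as many iterations as there are buffers currently between the one holding \emph{location} and \emph{tailOfQueue}, again a finite number. One subtlety I would address explicitly: the consumer may concurrently be ``folding'' the queue and freeing buffers, so I must argue the \emph{prev} chain the enqueuer walks stays valid — this is exactly what the deferred garbage-collection mechanism described in the text (keeping the meta-data structure of a folded buffer on a GC list, with correct \emph{prev}/\emph{next} pointers, until enqueuers are done) is designed to guarantee, and I would cite that discussion rather than re-derive it.

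The main obstacle is making the termination bound of the forward loop genuinely finite under adversarial scheduling: one must be sure that \emph{tail} cannot run away from the allocated buffers faster than this enqueuer can extend them. The resolution is that \emph{location} is frozen at \texttt{FAA} time, while every iteration of the loop monotonically increases the number of appended buffers (by this thread or another), so the gap the loop must close is bounded by a constant fixed at invocation; concurrent enqueuers grabbing larger indices only help, never hurt. Once this is nailed down, the finite bounds on the two loops plus the constant-time remainder give the lemma, and I would close by remarking that the bound depends only on \emph{location} and the number of buffers extant at invocation, so it is indeed a per-operation finite bound as wait-freedom requires.
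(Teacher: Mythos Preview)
Your proposal is correct and follows essentially the same approach as the paper: decompose the operation into the initial \texttt{FAA}, the forward ``grow'' loop at line~\ref{alg:Hnext}, the backward loop at line~\ref{alg:Hback}, and the straight-line tail, and argue each loop terminates because (i) every iteration of the forward loop appends at least one buffer (by this or some other thread) while \emph{location} is fixed, and (ii) the backward walk traverses finitely many already-allocated buffers. You additionally spell out the explicit iteration bound for the forward loop and flag the \emph{prev}-chain validity issue under concurrent folding, which the paper's proof of this lemma leaves implicit; these are useful elaborations but not a different argument.
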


\ifdefined\COVER
For lack of space, the technical proof of Lemma~\ref{lemma-enqueue} is deferred to the appendix.
\else
\begin{proof}
An enqueue operation, as listed in Algorithm~\ref{alg:enqueue_high}, consists of two while loops (line~\ref{alg:Hnext} and line~\ref{alg:Hback}), each involving a finite number of operations, a single FAA operation at the beginning (line~\ref{alg:HFAA}), and then a short finite sequence of operations from line~\ref{line:enq-location} onward.
Hence, we only need to show that the two while loops terminate in a finite number of iterations.

The goal of the first while loop is to ensure that the enqueuer only accesses an allocated buffer.
That is, in each iteration, if the \emph{location} index it obtained in line~\ref{alg:HFAA} is beyond the last allocated buffer, the enqueuer allocates a new buffer and tries to add it with a CAS to the end of the queue (line~\ref{alg:Hnext}).
Even if there are concurrent such attempts by multiple enqueuers, in each iteration at least one of them succeeds, so this loop terminates in a finite number of steps at each such enqueuer.

The next step for the enqueuer is to obtain the buffer corresponding to \emph{location}.
As mentioned before, it is possible that by this time the queue has grown due to concurrent faster enqueue operations.
Hence, the enqueuer starts scanning from \emph{tailOfQueue} backwards until reaching the correct buffer (line~\ref{alg:Hback}).
Since new buffers are only added at the end of the queue, this while loop also terminates in a finite number of steps regardless of concurrency.
\end{proof}

\fi

\begin{lemma}\label{lemma-dequeue}
Each dequeue operation in Algorithm~\ref{alg:dequeue_high} completes in a finite number of steps.	
\end{lemma}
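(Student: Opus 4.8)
The plan is to bound the number of steps taken by a dequeue operation by analyzing the structure of Algorithm~\ref{alg:dequeue_high} and observing that its total work is governed by the number of buffers and entries currently present in the queue, which is itself finite at the moment the dequeue is invoked. Since there is a single consumer, no other dequeue runs concurrently; the only concurrency is with enqueuers. The key invariant I would establish is that enqueuers can only \emph{append} new buffers at the tail and can only change an \emph{isSet} flag from \texttt{empty} to \texttt{set} (never back, and never to \texttt{handled}); hence the ``work horizon'' of the dequeue --- the stretch of entries from \emph{head} to \emph{tail} at invocation time --- does not grow in a way that matters for the loops, and the number of entries whose status can still flip under the consumer's feet is bounded by the queue length at invocation.

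First I would treat each of the four loops in turn. The loop at line~\ref{alg:HHandled} advances \emph{head} over \texttt{handled} entries and deletes fully-read buffers; each iteration strictly advances \emph{head} by one entry, and \emph{head} is bounded by \emph{tail} (plus the finitely many entries in already-allocated-but-not-yet-linked buffers), so it terminates. The scan at line~\ref{alg:HforN} advances \emph{tempN} monotonically from \emph{n} toward the end of the queue, one entry per iteration, with occasional buffer ``folds'' that only shorten the structure; it too is bounded by the queue length at invocation. The subtle loop is the one at line~\ref{alg:forEach}, which restarts ``from \emph{n}'' every time it discovers an entry between \emph{n} and \emph{tempN} that has become \texttt{set} (line~\ref{alg:Hscan}); naively this could loop forever if enqueuers keep setting entries. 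The crucial observation is that each restart is triggered by moving \emph{tempN} strictly \emph{closer to \emph{n}} (the newly found \texttt{set} entry \emph{e} lies strictly before the current \emph{tempN}), so \emph{tempN} can be reassigned to an earlier position at most (distance from \emph{n} to the original \emph{tempN}) times --- a quantity fixed by the first two loops and bounded by the queue length. Between consecutive restarts the for-loop itself does at most that many iterations. Hence the total work of line~\ref{alg:forEach} is at most quadratic in the queue length at invocation, which is finite. The remaining code from line~\ref{alg:HremoveN} onward is a fixed finite sequence.

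I would then assemble these bounds: let $L$ be the number of entries between \emph{head} and \emph{tail} (together with entries in buffers already allocated by enqueuers that obtained the ``second entry'' optimization) at the moment \emph{dequeue} is invoked. Because enqueuers neither remove entries nor un-set flags, and because no buffer beyond the one containing the current \emph{tail} value can be reached by the consumer's forward scans within this operation, every loop is bounded by a function of $L$ (linear for the first two, quadratic for the third), and $L$ is finite in any execution since only finitely many enqueues can have been invoked so far. Therefore the dequeue completes in $O(L^2)$ steps, which is finite.

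The main obstacle I expect is making the termination argument for the line~\ref{alg:forEach} loop fully rigorous: one must pin down precisely that each execution of line~\ref{alg:Hscan} decreases the \emph{head}-to-\emph{tempN} distance by at least one, and that \emph{nothing} in the loop body or in concurrent enqueue steps can increase that distance (in particular, enqueuers never move \emph{head}, never create entries between \emph{head} and an existing \emph{set} entry, and the folds only delete already-\texttt{handled} buffers, which lie behind \emph{tempN}). A secondary delicate point, shared with the enqueue proof and the garbage-collection discussion, is arguing that the buffer deletions and ``folds'' performed \emph{by the dequeuer itself} inside these loops cannot cause it to revisit the same buffer or spin --- this follows because a fold only removes a buffer all of whose cells are \texttt{handled} and then advances to its successor, strictly making progress along the list.
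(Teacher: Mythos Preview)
Your proposal is correct and follows essentially the same approach as the paper: a loop-by-loop analysis culminating in the observation that each restart at line~\ref{alg:Hscan} strictly decreases the \emph{n}-to-\emph{tempN} distance, so only finitely many restarts can occur. Your treatment is in fact more careful than the paper's --- you make explicit the invariants about what concurrent enqueuers can and cannot do, and you extract a quantitative $O(L^2)$ bound --- but the skeleton and the key termination idea are identical.
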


\ifdefined\COVER
For lack of space, the technical proof of Lemma~\ref{lemma-dequeue} is deferred to the appendix.
\else
\begin{proof}
We prove the lemma by analyzing Algorithm~\ref{alg:dequeue_high}.
Consider the while loop at line~\ref{alg:HHandled}.
Here, we advance head to point to the first non-\texttt{handled} element, in case it is not already pointing to one.
As indicated before, the latter could occur if a previous dequeue $deq_1$ removed an element not pointed by head.
As we perform this scan only once and the queue is finite, the loop terminates within a finite number of iterations, each consisting of at most a constant number of operations.

If at this point the queue is identified as \texttt{empty}, which is detectable by comparing head and tail pointers and indices, then the operation returns immediately (line~\ref{alg:Hempty}).
The following for loop is at line~\ref{alg:HforN}.
Again, since the queue is finite, the next \texttt{set} element is within a finite number of entries away.
Hence, we terminate the for loop after a finite number of iterations.

The last for loop iteration at line~\ref{alg:forEach} scans the queue from its head to the first \texttt{set} element, which as mentioned before, is a finite number of entries away.
Yet, the for loop could be restarted in line~\ref{alg:Hscan}, so we need to show that such a restart can occur only a finite number of times.
This is true because each time we restart, we shorten the ``distance'' that the for loop at line~\ref{alg:forEach} needs to cover, and as just mentioned, this ``distance'' is finite to begin with.

The rest of the code from line~\ref{alg:HremoveN} onward is a short list of simple instructions.
Hence, each dequeue operation terminates in a finite number of steps.
\end{proof}

\fi

\begin{theorem}
The \MYQUEUE{} queue implementation is wait-free.
\end{theorem}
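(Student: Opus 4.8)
The plan is to obtain wait-freedom as an immediate corollary of the two preceding lemmas. First I would recall the relevant definition: an implementation is wait-free if every operation invoked by any thread is guaranteed to terminate within a finite number of that thread's own steps, regardless of the relative speeds, delays, or halting of the other threads. Since \MYQUEUE{} exposes exactly two operations — enqueue (Algorithm~\ref{alg:enqueue_high}) and dequeue (Algorithm~\ref{alg:dequeue_high}) — it suffices to establish this termination bound for each of the two operations in isolation, and then note that the case split is exhaustive.

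For the enqueue operation I would appeal directly to Lemma~\ref{lemma-enqueue}, and for the dequeue operation to Lemma~\ref{lemma-dequeue}. The one point deserving a word of care is that wait-freedom quantifies over \emph{all} executions, including worst-case adversarial interleavings, so I want to make explicit that the finiteness guaranteed by these lemmas is uniform over all schedules rather than merely a per-execution statement. This is in fact already what the lemma proofs deliver: the enqueue proof argues that each of its two \texttt{while} loops makes progress because at least one concurrent \texttt{CAS} must succeed in every round and because new buffers are only appended at the tail, so the backward scan is bounded; the dequeue proof argues that each of its loops, and each restart triggered at line~\ref{alg:Hscan}, only shortens an a~priori finite ``distance'' from head to the first \texttt{set} entry. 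Hence the step bounds hold regardless of what other threads do.

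Combining these observations, every enqueue and every dequeue terminates after finitely many steps of the invoking thread no matter how the other threads are scheduled, which is precisely the definition of wait-freedom. I do not anticipate any genuine obstacle here: the theorem is a direct consequence of Lemmas~\ref{lemma-enqueue} and~\ref{lemma-dequeue}, and the only thing to be careful about is stating that enqueue and dequeue are the sole operations of the structure so that nothing further needs to be checked.
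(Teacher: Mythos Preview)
Your proposal is correct and matches the paper's own proof essentially verbatim: the paper simply invokes Lemmas~\ref{lemma-enqueue} and~\ref{lemma-dequeue} to conclude that both enqueue and dequeue are wait-free, and therefore the queue implementation is wait-free. Your additional remarks about the step bounds being uniform over all schedules and about enqueue and dequeue being the only exposed operations are sound elaborations, but the core argument is identical.
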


\begin{proof}
Lemmas~\ref{lemma-enqueue} and~\ref{lemma-dequeue} show that both enqueue and dequeue operations are wait-free.
Therefore, the queue implementation is wait-free.
\end{proof}

\section{Performance Evaluation}
\label{sec:eval}
We compare \MYQUEUE{} to several representative queue implementations in the literature:
Yang and Mellor-Crummey's queue~\cite{yang2016wait} is the most recent wait free FAA-based queue, denoted WFqueue;
Morrison and Afek LCRQ \cite{morrison2013fast} as a representative of nonblocking FAA-based queues;
Fatourou and Kallimanis CCqueue~\cite{fatourou2012revisiting} is a blocking queue based on the combining principle.
We also test Michael and Scott's classic lock-free MSqueue~\cite{MS96}.
We  include a microbenchmark that only preforms FAA on a shared variable.
This serves as a practical upper bound for the throughput of all FAA based queue~implementations.
Notice that \MYQUEUE{} performs FAA only during enqueue operations, but not during dequeues.
\ifdefined\COVER
Also, measurements of Jiffy do \emph{not} include a memory buffer pool optimization mentioned in Appendix~\ref{sec:additional-results}.
\else
Also, measurements of Jiffy do \emph{not} include a memory buffer pool optimization mentioned in Section~\ref{sec:buffer-pool}.
\fi

\noindent\textbf{Implementation:}
We implemented our queue algorithm in C++~\cite{jiffy-code}.
We compiled \MYQUEUE{} with g++ version 7.4.0 with -Os optimization level.
We use the C implementation provided by~\cite{yang2016wait} for the rest of the queues mentioned here.
They are compiled with GCC 4.9.2 with -Os optimization level.
The buffer size of our queue is 1620 entries. 
The segment size of Yang and Mellor-Crummey queue is $2^{10}$
and in LCRQ it is $2^{12}$, the optimal sizes according to their respective authors.

\noindent\textbf{Platforms:}
We measured performance on the following servers:
\begin{itemize}
	\item AMD PowerEdge R7425 server with two AMD EPYC 7551 Processors. 
	Each processor has $32$ 2.00GHz/2.55GHz cores, each of which multiplexes $2$ hardware threads, so in total this system supports $128$ hardware threads.
	The hardware caches include 32K L1 cache, 512K L2 cache and 8192K L3 cache, and there are 8 NUMA nodes, 4 per processor.
	
	\item Intel Xeon E5-2667 v4 Processor including $8$ 3.20GHz cores with 2 hardware threads, so this system supports $16$ hardware threads.
	The hardware caches include 32K L1 cache, 256K L2 cache and 25600K L3 cache.
	
\end{itemize}

\noindent\textbf{Methodology:}
We used two benchmarks: one that only inserts elements to the queue (enqueue only benchmark) whereas the second had one thread that only dequeued while the others only~enqueued.

In each experiment, x threads concurrently perform operations for a fixed amount of seconds, as specified shortly.
We tightly synchronize the start and end time of threads by having them spin-wait on a ``start'' flag.
Once all threads are created, we turn this flag on and start the time measurement.
To synchronize the end of the tests, each thread checks on every operation an ``end'' flag (on a while loop). 
When the time we measure ends we turn this flag on.
Each thread then counts the amount of finished operations it preformed and all are combined with FAA after the ``end'' flag is turned on.

In order to understand the sensitivity of our results to the run lengths, we measure both $1$ and $10$ seconds runs.
That is, the fixed amount of time between turning on the ``start'' and ``end'' flags is set to $1$ and $10$ seconds, respectively.
The graphs depict the throughput in each case, i.e., the number of  operations applied to the shared queue per second by all threads, measured in million operations per second (MOPS).
Each test was repeated 11 times and the average throughput is reported.
All experiments employ an initially empty queue.

With AMD PowerEdge for all queues we pinned each software thread to a different hardware thread based on the architecture of our server.
We started mapping to the first NUMA node until it became full, with two threads running on the same core.
We continue adding threads to the closest NUMA node until we fill all the processor hardware threads.
Then we move to fill the next processor in the same way.

For Intel Xeon server we test up to 32 threads without pinning whereas the server only has 16 hardware threads.
This results in multiple threads per hardware thread.

\begin{figure}[t]
	\begin{subfigure}{0.5\textwidth}
	\center{
		\includegraphics[width=0.93\columnwidth]{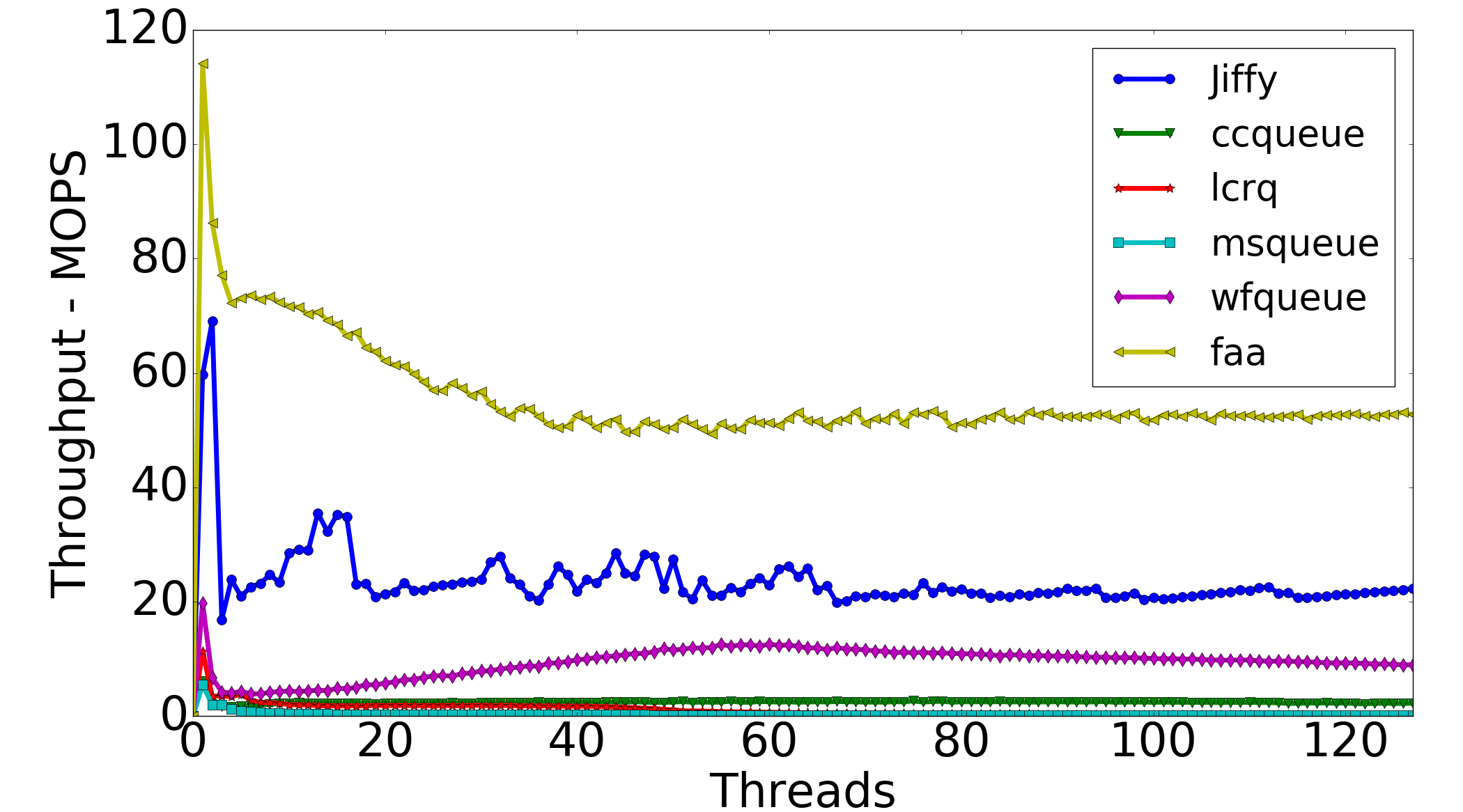}
	}
	\caption{
		AMD PowerEdge.
	}
	\label{fig:limonEnqueue}
\end{subfigure}
\begin{subfigure}{0.5\textwidth}
	\center{
		\includegraphics[width=0.93\columnwidth]{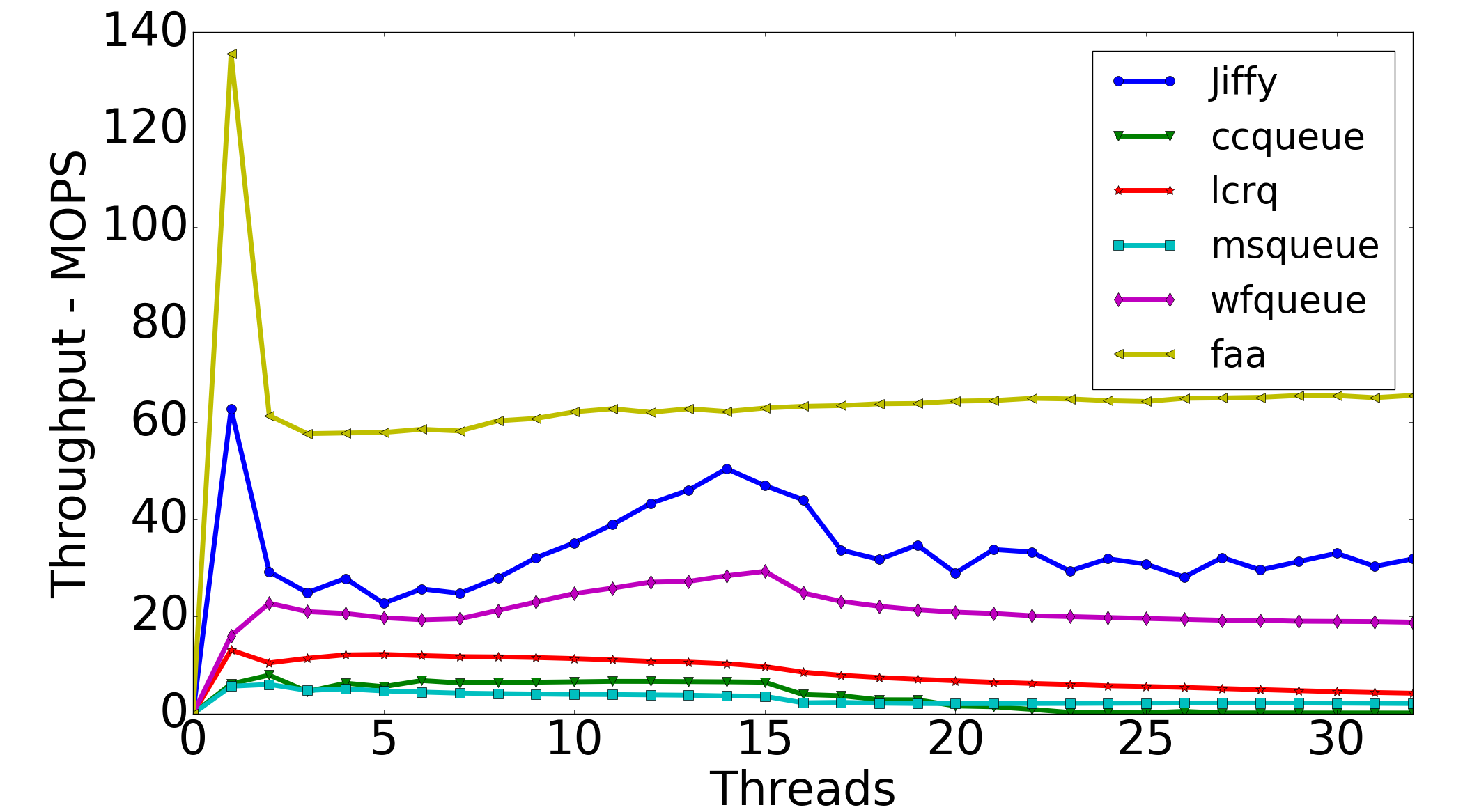}
	}
	\caption{
		Intel Xeon.
	}
	\label{fig:heavyEnqueue}
\end{subfigure}
\caption{Enqueues only - 1 second runs.}
\end{figure}

\begin{figure}[t]
	\begin{subfigure}{0.5\textwidth}
		\center{
		\includegraphics[width=0.93\columnwidth]{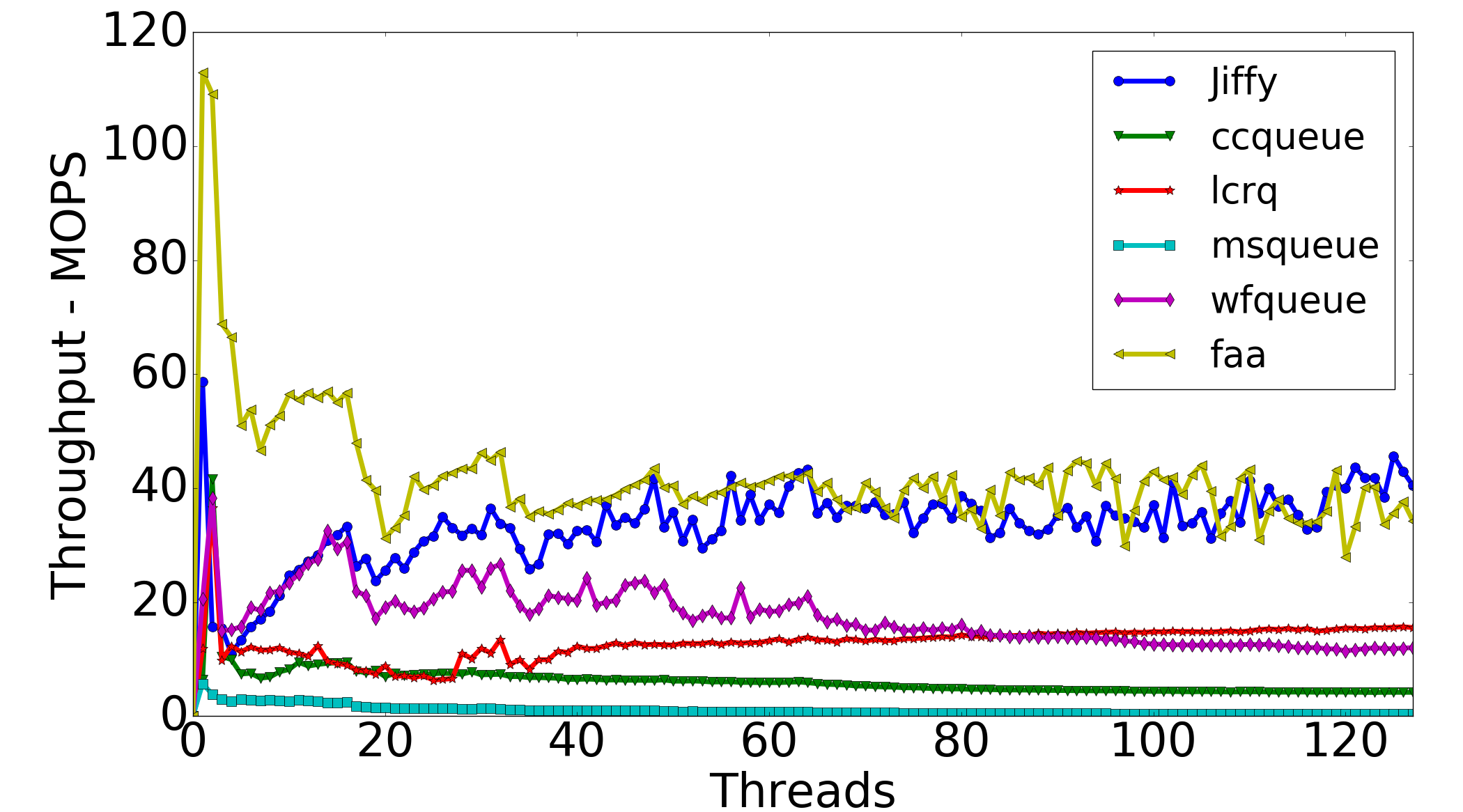}
	}
	\caption{
		AMD PowerEdge.
	}
	\label{fig:limondequeue}	
\end{subfigure}
\begin{subfigure}{0.5\textwidth}
	\center{
		\includegraphics[width=0.93\columnwidth]{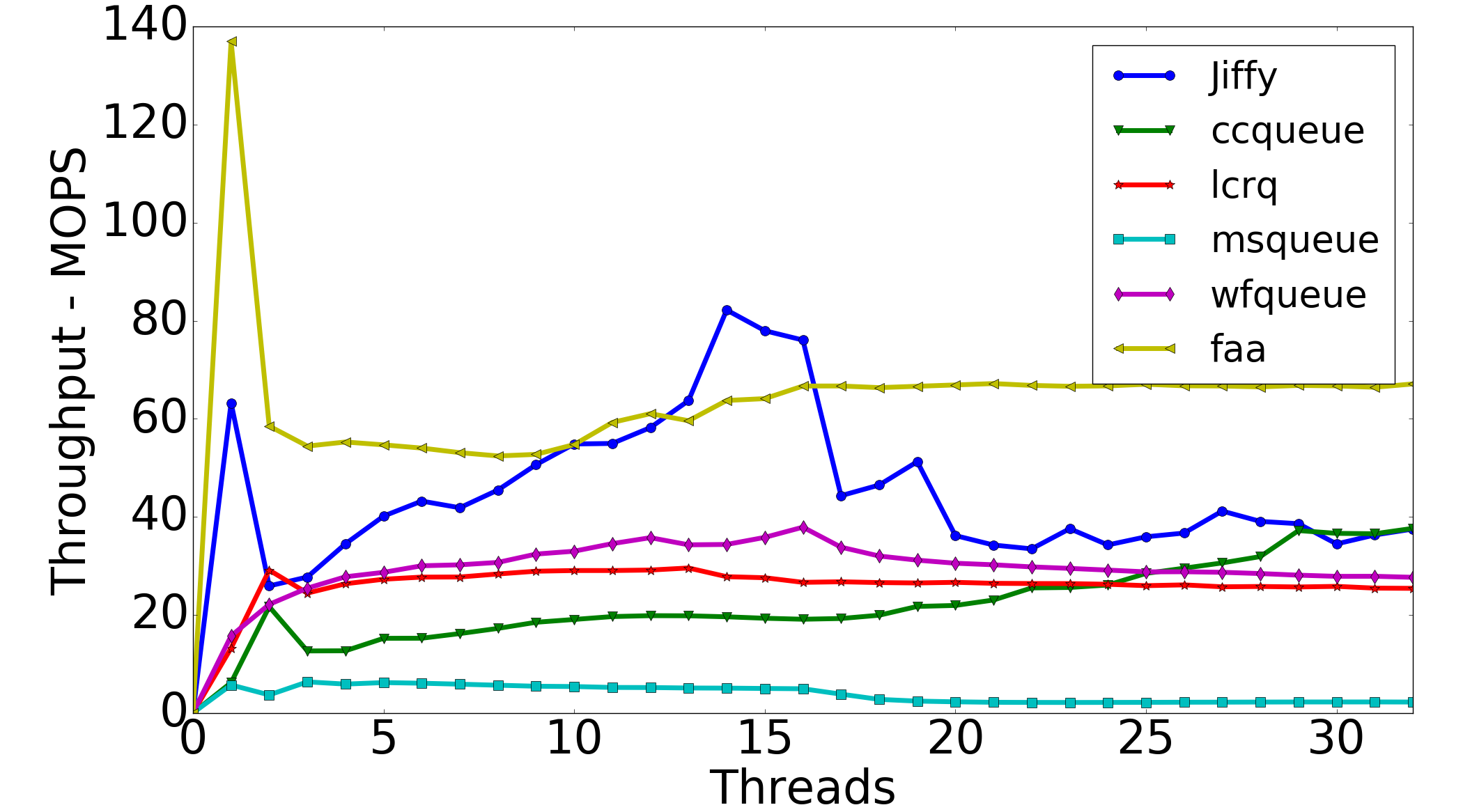}
	}
	\caption{
		Intel Xeon.
	}
	\label{fig:havydequeue}	
\end{subfigure}
\caption{Multiple enqueuers with a single dequeuer - 1 second runs.}
\end{figure}

\begin{figure}[t]
	\begin{subfigure}{0.5\textwidth}
			\center{
		\includegraphics[width=0.93\columnwidth]{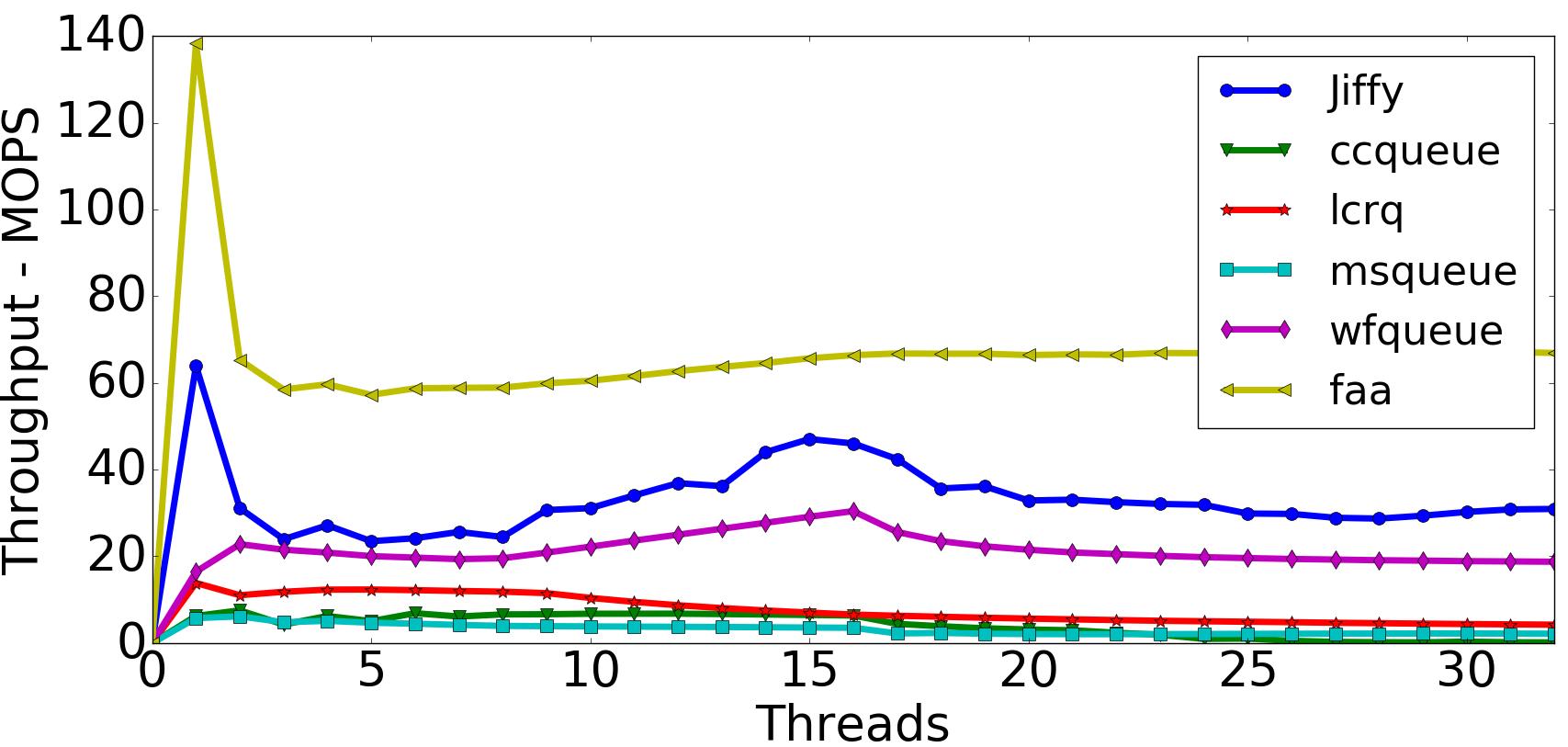}
	}
	\caption{
		Enqueues only.
	}
	\label{fig:10havyenqueue}
\end{subfigure}
\begin{subfigure}{0.5\textwidth}
	\center{
		\includegraphics[width=0.93\columnwidth]{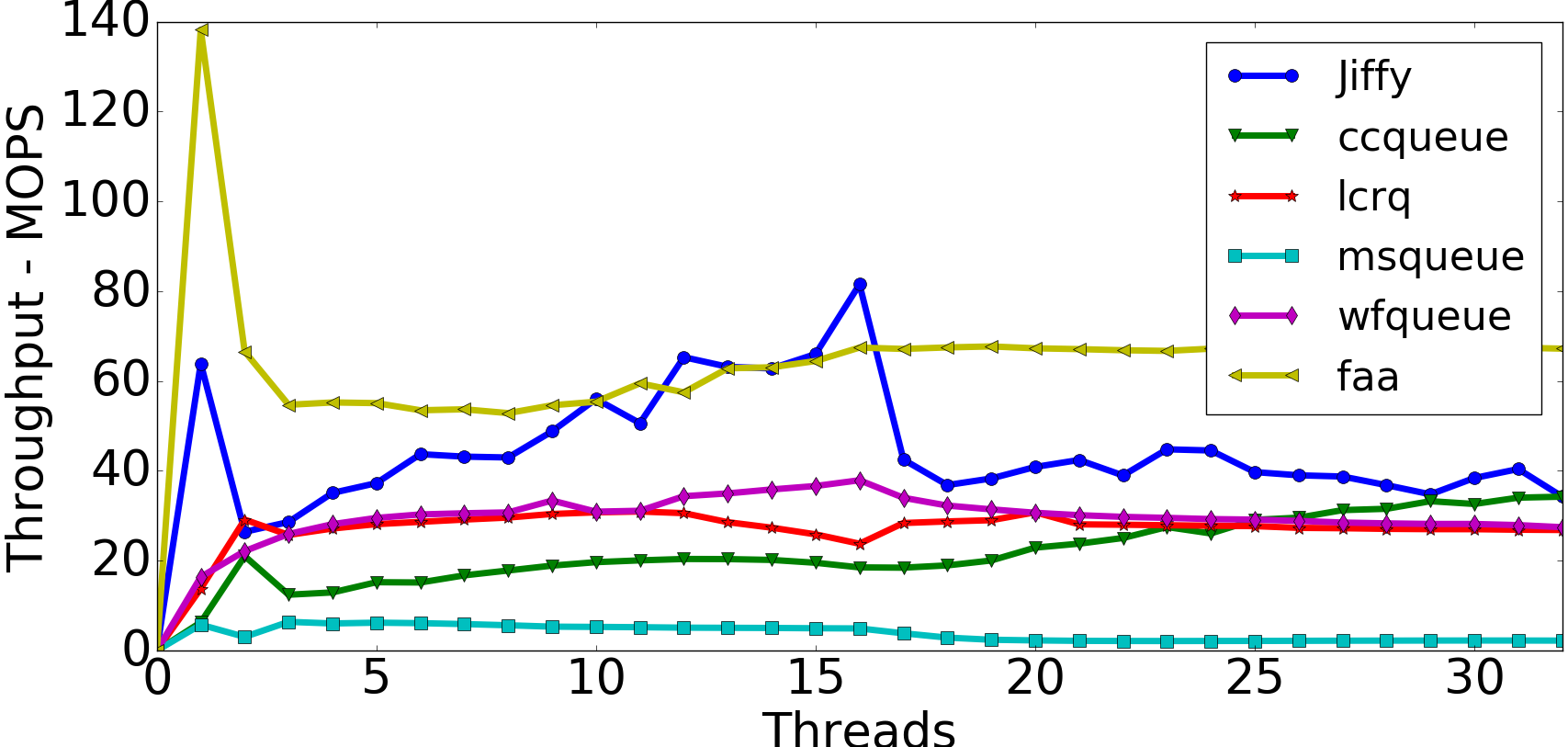}
	}
	\caption{
		Multiple enqueuers with a single deqeuer.
	}
	\label{fig:10havydequeue}
\end{subfigure}
\caption{Intel Xeon - 10 seconds runs.}
\end{figure}

\paragraph*{Total Space Usage}
We collected memory usage statistics via valgrind version 3.13.0~\cite{nethercote2003valgrind}.
We measure the memory usage when inserting $10^7$ elements to the queue on AMD PowerEdge.
Table~\ref{tab:2threads} lists the memory usage when the queues are used by one enqueuer and one dequeuer.
\MYQUEUE's memory usage is significantly smaller than all other queues compared in this work.
\MYQUEUE{} uses 38.7 MB of heap memory which is 93.67\% less than the WFqueue consumption of 611.63 MB, and 97\% less than MSqueue.
The peak heap size depicts the point where memory consumption was greatest.
As can be seen \MYQUEUE's peak is significantly lower than the rest.
\MYQUEUE's miss ratios in L1 and L3 data caches are better due to its construction as a linked list of arrays.

\ifdefined\COVER
Table~\ref{tab:128threads} in Appendix~\ref{sec:additional-results} depicts the memory usage with 127 enqueuers and a single dequeuer.
\MYQUEUE's heap consumption has grown to 77.10 MB, yet it is 87\% less than WFqueue's consumption, 87\% less than CCqueue and
96.8\% less than MSqueue.
As mentioned above, this memory frugality is an important factor in \MYQUEUE's cache friendliness.
\else
\paragraph*{Memory Usage with 128 Threads}

Table~\ref{tab:128threads} lists the Valgrind statistics for the run with 127 enqueuers and one dequeuer.
Here, \MYQUEUE's heap consumption has grown to 77.10 MB, yet it is 87\% less than WFqueue's consumption, 87\% less than CCqueue and
96.8\% less than MSqueue.
As mentioned above, this memory frugality is an important factor in \MYQUEUE's cache friendliness, as can be seen by the cache miss statistics of the CPU data caches (D1 and L3d miss).

\begin{table}[H]
	\resizebox{\textwidth}{!}{%
		\begin{tabular}{|l||l||l|l||l|l||l|l||l|l|}
			\hline
			\multirow{10}{*}{} &
			\multicolumn{1}{c||}{Jiffy} &
			\multicolumn{2}{c||}{WF} &
			\multicolumn{2}{c||}{LCRQ}&
			\multicolumn{2}{c||}{CC}&
			\multicolumn{2}{c|}{MS}\\
			& Absolute  & Absolute &Relative & Absolute &Relative & Absolute & Relative & Absolute & Relative \\
			\hline
			Total Heap Usage & 77.10 MB &611.70 MB & x7.93 &1.19 GB &x15.8 & 605.68 MB & x7.85 &2.36 GB &x31.42 \\
			\hline
			Number of Allocs & 6409 & 10045 & x1.57&2819 &x0.44 & 9922394 &x1548 & 9922263 & x1548 \\
			\hline
			Peak Heap Size & 87.42 MB  &  624.5 MB & x7.14 &1.191 GB  &x13.94 &  796.7 MB &x9.11 &2.426 GB &x28.42 \\
			\hline
			\# of Instructions Executed & 678,651,794 & 3,099,458,758 &x4.57  &1,671,748,656  &x0.99 & 8,909,842,602 &x13.13 & 11,944,994,600 & x17.60  \\
			\hline
			I1  Misses & 2,238 &1,724 & x0.77 &1,669 &x0.75 & 1,668 & x0.75& 1,689 & x0.75 \\
			\hline
			L3i Misses &2,194 & 1,717 &x0.78 & 1,658&x0.76 & 1,664 & x0.76 &1,684 &x0.77 \\
			\hline
			Data Cache Tries (R+W)& 352,980,193 & 1,849,850,595 & x5.24  & 690,008,884 &x1.95 & 3,172,272,116 &x8.99 & 3,237,610,091 & x9.17 \\
			\hline
			D1  Misses & 2,643,266 & 51,230,800 &x19.38 &30,012,317 &x11.35  &68,349,604 &x25.86 & 79,097,745 & x29.92 \\
			\hline
			L3d Misses & 1,298,646 & 40,453,921 & x31.15  & 19,946,542  &x15.36  & 57,287,592 & x44.11 & 64,219,733 & x49.45 \\
			\hline
		\end{tabular}
	}
	\normalsize
	\caption{
		Valgrind memory usage statistics with 127 enqueuers and one dequeuer. I1/D1 is the L1 instruction/data cache respectively while L3i/L3d is the L3 instruction/data cache respectively.
	}
	\label{tab:128threads}
\end{table}

\fi

\paragraph*{Throughput Results}


Figure~\ref{fig:limonEnqueue} shows results for the enqueues only benchmark on the AMD PowerEdge server for one second runs.
\MYQUEUE{} obtains the highest throughput with two threads, as the two threads are running on the same core and different hardware threads. 
The rest of the queues obtain the highest throughput with only one thread.
\MYQUEUE{} outperforms all queues, reaching as much as 69 millions operations per second (MOPS) with two threads.
The FAA is an upper-bound for all the queues as they all preform FAA in each enqueue operation.
The peak at 16 threads and the drop in 17 threads in \MYQUEUE{} is due to the transition into a new NUMA node.
At this point, a single thread is running by itself on a NUMA node.
Notice also the minor peaks when adding a new core vs. starting the 2$^{\mathrm{nd}}$ thread on the same core.

Beyond 64 threads the application already spans two CPUs.
The performance reaches steady-state.
This is because with 2 CPUs, the sharing is at the memory level, which imposes a non-negligible overhead.
\MYQUEUE{} maintains its ballpark performance even when 128 enqueuers are running with a throughput of 22 MOPS.  

Figure~\ref{fig:heavyEnqueue} shows results for the enqueues only benchmark on the Intel Xeon server for one second runs.
All of the queues suffer when there are more threads than hardware threads, which happens beyond 16 threads. 

Figure~\ref{fig:limondequeue} shows results with a single dequeuer and multiple enqueuers on the AMD PowerEdge server for one second runs.
\MYQUEUE{} is the only queue whose throughput improves in the entire range of the graph (after the initial 2-threads drop). 

Figure~\ref{fig:havydequeue} shows results with a single dequeuer and multiple enqueuers on the Intel Xeon server for one second runs.
Here \MYQUEUE{} outperforms in some points even the FAA benchmark.
This is because the dequeuer of \MYQUEUE{} does not preform any synchronization operations.
None of the other queues can achieve this due to their need to support multiple dequeuers.

Figure~\ref{fig:10havyenqueue} shows results for the enqueues only benchmark on the Intel Xeon server when the run length is set to $10$ seconds.
%
Figure~\ref{fig:10havydequeue} shows results for a single dequeuer and multiple enqueuers on the Intel Xeon server with $10$ seconds runs.
As can be seen, in both cases the results are very similar to the $1$ second runs. 
The same holds for the AMD PowerEdge server.

Let us comment that in a production system, the enqueque rate cannot surpass the dequeue rate for long periods of time; otherwise the respective queue would grow arbitrarily.
However, it is likely to have short bursts lasting a few seconds each, where a single shard gets a disproportionate number of enqueues.
This test exemplifies \MYQUEUE's superior ability to overcome such periods of imbalance.

\section{Conclusions}
\label{sec:discussion}

In this paper we presented \MYQUEUE, a fast memory efficient wait-free multi-producers single-consumer FIFO queue.
\MYQUEUE{} is based on maintaining a linked list of buffers, which enables it to be both memory frugal and unbounded.
Most enqueue and dequeue invocations in \MYQUEUE{} complete by performing only a few atomic operations.

Further, the buffer allocation scheme of \MYQUEUE{} is designed to reduce contention and memory bloat.
Reducing the memory footprint of the queue means better inclusion in hardware caches and reduced resources impact on applications. 

\MYQUEUE{} outperforms prior queues in all concurrency levels especially when the dequeuer is present.
%
Moreover, \MYQUEUE's measured memory usage is significantly smaller than all other queues tested in this work, $\approx90\%$ lower than WFqueue, LCRQ, CCqueue, and MSqueue.




\scriptsize
{ \bibliographystyle{abbrv}
	\bibliography{references}
}
\newpage
\appendix
\ifdefined\COVER
\section{Missing Proofs}
\subsection{Proof of Lemma~\ref{lemma-enqueue}}

\subsection{Proof of Lemma~\ref{lemma-dequeue}}

\fi
\section{Detailed Pseudocode for \MYQUEUE{}}
\label{sec:detailedcode}
\paragraph*{Enqueue Operation}
Algorithm~\ref{alg:enqueue} lists the detailed implementation of the enqueue operation, expanding the high level pseudo-code given in Algorithm~\ref{alg:enqueue_high}.
Specifically, if the location index obtained through the fetch\_add(1) operation in line~\ref{alg:FAA} is beyond the last allocated buffer, the enqueuer allocates a new buffer and tries to add it to the queue using a \emph{CAS} operation.
This is performed by the while loop at line~\ref{alg:next} of Algorithm~\ref{alg:enqueue}, which corresponds to the while loop in line~\ref{alg:Hnext} of Algorithm~\ref{alg:enqueue_high}.
If the CAS fails, it means that another thread succeeded, so we can move to the new allocated buffer and check if the location we fetched at line~\ref{alg:FAA} is in that buffer.
If it is not, then we continue trying to add buffers until reaching the correct buffer.

The while loop at line~\ref{alg:back} of Algorithm~\ref{alg:enqueue} matches the loop at line~\ref{alg:Hback} in Algorithm~\ref{alg:enqueue_high}, where if needed, the enqueuer retracts from the end of the queue to the previous buffer corresponding to location.
Line~\ref{line:adjust} in Algorithm~\ref{alg:enqueue} is where we calculate the correct index.
To do that we remove from location, which is a global index for the entire queue, the amount of items up to the current buffer.
This is done at line~\ref{alg:prevSize} of Algorithm~\ref{alg:enqueue}.
We then insert the node in line~\ref{alg:data}.
Yet, just before returning, if the thread is in the last buffer of the queue and it obtained the second index in that buffer, then the enqueuer tries to add a new buffer to the end of the queue at line~\ref{alg:indexone}.

\begin{algorithm}[h]
	\caption{Enqueue operation - detailed}
	\label{alg:enqueue}
	\scriptsize
	\begin{algorithmic}[1]
		\Function {enqueue}{data}
		\State {unsigned int location = tail.fetch\_add(1)}\label{alg:FAA}
		\State{bool isLastBuffer = true}

		\State{bufferList* tempTail = tailOfQueue.load()}\label{alg:tempTail}
		
		\State{unsigned int numElements = bufferSize*tempTail  $\rightarrow$ positionInQueue}  	
	\While{location $\geq$ numElements}\label{alg:next}
		\State  //location is in the next buffer
		\If{(tempTail  $\rightarrow$ next).load() == NULL} 	
		\State  //buffer not yet exist in the queue
		\State{bufferList* newArr = new bufferList(bufferSize, tempTail$\rightarrow$ positionInQueue + 1, tempTail)}		
		\If{CAS(\&(tempTail$\rightarrow$ next), NULL, newArr)}\label{alg:succeed}					
		\State{CAS(\&tailOfQueue,\&tempTail, newArr)}\label{alg:succeed2}		
		\Else	
		\State{delete  newArr}\label{alg:delete}
		\EndIf	
		\EndIf
			\State{tempTail = tailOfQueue.load()} 
		\State{numElements = bufferSize*tempTail  $\rightarrow$ positionInQueue}  
	\EndWhile
					
		\State  //calculating the amount of item in the queue - the current buffer
		\State unsigned  int  prevSize= bufferSize*(tempTail $\rightarrow$ positionInQueue-1) \label{alg:prevSize}
		\While{location < prevSize}\label{alg:back}
			\State  // location is in a previous buffer from the buffer pointed by tail
			\State{tempTail = tempTail$\rightarrow$ prev}
			\State {prevSize = bufferSize*(tempTail $\rightarrow$ positionInQueue - 1)}
			\State{isLastBuffer = false}
		\EndWhile
		
		\State  // location is in this buffer	

		\State{Node* n = \&(tempTail$\rightarrow$ currbuffer[location - prevSize])} \label{line:adjust}
		\If{n$\rightarrow$ isSet.load() ==  State.empty }	
			\State{n $\rightarrow$ data = data}\label{alg:data}
			\State{n $\rightarrow$ isSet.store(State.set)}\label{alg:dataSet}
			
			\If{index == 1 \&\& isLastBuffer}\label{alg:indexone}
				\State  //allocating a new buffer and adding it to the queue
				\State{bufferList* newArr = new bufferList(bufferSize, tempTail$\rightarrow$ positionInQueue + 1, tempTail)}		
				\If{!CAS(\&(tempTail$\rightarrow$ next), NULL, newArr)}	
					\State{delete  newArr}
				\EndIf
			\EndIf

		\EndIf
		
		\EndFunction
	\end{algorithmic}
\end{algorithm}

\paragraph*{Dequeue Operation}
Algorithm~\ref{alg:dequeue} provides the detailed implementation for the dequeue operation, corresponding to the high level pseudo-code in Algorithm~\ref{alg:dequeue_high}.
First, we skip \texttt{handled} elements in lines~\ref{alg:handledStart}--\ref{alg:handledEnd} of Algorithm~\ref{alg:dequeue}, which match the loop at line~\ref{alg:HHandled} of Algorithm~\ref{alg:dequeue_high}.
Next, we check whether the queue is empty and if so return false (lines~\ref{alg:empty}--\ref{alg:CheckEmptyEnd} of Algorithm~\ref{alg:dequeue}).
To do so, we compare the \emph{headOfQueue} and \emph{tailOfQueue} to check if they point to the same \emph{BufferList} as well as compare the tail and the head.
Note that the tail index is global to the queue while the head index is a member of the \emph{BufferList} class.

If the element pointed by head is marked \texttt{set}, we simply remove it from the queue and return (lines~\ref{alg:nSetStart}--\ref{alg:nSetEnd}).
Next, if the first item is in the middle of an enqueue process (\textit{isSet}=\texttt{empty}), then the consumer scans the queue for a later \texttt{set} item.
This is performed by function \textsc{Scan} listed in Algorithm~\ref{alg:scan}, which matches line~\ref{alg:HforN} in Algorithm~\ref{alg:dequeue_high}.
If such an item is found, it is marked \textit{tempN}.
The folding of the queue invoked in line~\ref{alg:Hfold} of Algorithm~\ref{alg:dequeue_high} is encapsulated in the \textsc{Fold} function listed in Algorithm~\ref{alg:fold}.

When preforming the fold, only the array is deleted, which consumes the most memory.
The reason for not deleting the the entire array's structure is to let enqueuers, who kept a pointer for \emph{tailOfQueue} at line~\ref{alg:tempTail} of Algorithm~\ref{alg:enqueue}, point to a valid \emph{BufferList} at all times with the correct \emph{prev} and \emph{next} pointers.
To delete the rest of the array structure later on, the dequeuer keeps this buffer in a list called \emph{garbageList}, a member of the \MYQUEUE{} class (line~\ref{alg:garbage}). 
When a \emph{BufferList} from the queue is deleted, the dequeuer checks if there is a \emph{BufferList} in \emph{garbageList} that is before the buffer being deleted.
If so, the dequeuer deletes it as well (lines~\ref{alg:deleteGarbage}-~\ref{alg:deleteGarbageEnd}).

Before dequeuing, the consumer scans the path from head to \textit{tempN} to look for any item that might have changed its status to \texttt{set}.
This is executed by function \textsc{Rescan} in Algorithm~\ref{alg:rescan}, corresponding to line~\ref{alg:Hscan} of Algorithm~\ref{alg:dequeue_high}.
If such an item is found, it becomes the new \textit{tempN} and the scan is restarted.
This is to check whether there is an even closer item to \texttt{n} that changed its status to \texttt{set}.
Finally, a dequeued item is marked \texttt{handled} in line~\ref{alg:removeTempN} of Algorithm~\ref{alg:dequeue}, matching line~\ref{alg:HremoveN} of Algorithm~\ref{alg:dequeue_high}.
Also, if the dequeued item was the last non-\texttt{handled} in its buffer, the consumer deletes the buffer and moves the head to the next one.
This is performed in line~\ref{alg:advanceHead} of Algorithm~\ref{alg:dequeue}, corresponding to line~\ref{alg:HadvanceHead} of Algorithm~\ref{alg:dequeue_high}. 

\begin{algorithm}[t]
	\caption{Dequeue operation - detailed}
	\label{alg:dequeue}	
	\scriptsize
	\begin{algorithmic}[1]
		\Function {dequeue}{T\&data}
			\State{Node* n = \&(headOfQueue $\rightarrow$ currbuffer[headOfQueue  $\rightarrow$ head]);}
			\While{n $\rightarrow$ isSet.load() == State.handled}\label{alg:handledStart} // find first non-handled item
				\State{headOfQueue$\rightarrow$head++}
				\State{bool res = \textsc{MoveToNextBuffer}()} // if at end of buffer, skip to next one
				\If {!res}
					\State{return false;} // reached end of queue and it is empty
				\EndIf
			\State{n = \&(headOfQueue$\rightarrow$currbuffer[headOfQueue$\rightarrow$head])} // n points to the beginning of the queue	
			\EndWhile \label{alg:handledEnd}

			\State{// check if the queue is empty}	
		\If{((headOfQueue == tailOfQueue.load()) \&\& (headOfQueue$\rightarrow$head == tail.load() \% bufferSize ))}  \label{alg:empty}
		\State{return false}
		\EndIf\label{alg:CheckEmptyEnd}

		\If{n $\rightarrow$ isSet.load() == State.set}  \label{alg:nSetStart} // if the first element is \texttt{set}, dequeue and return it
			\State{headOfQueue$\rightarrow$head++}
			\State{\textsc{MoveToNextBuffer()}}
			\State{data = n$\rightarrow$data}
			\State{return true}	
			\EndIf\label{alg:nSetEnd}

		\If{n $\rightarrow$ isSet.load() == State.empty} // otherwise, scan and search for a \texttt{set} element
		\State{bufferList* tempHeadOfQueue = headOfQueue}
		\State{unsigned int tempHead = headOfQueue$\rightarrow$head}
		\State{Node* tempN = \&(tempHeadOfQueue $\rightarrow$ currbuffer[tempHead])}
			\State{bool res = \textsc{Scan}(tempHeadOfQueue ,tempHead ,tempN)}
			\If {!res}
				\State{return false;} // if none was found, we return empty		
			\EndIf
		
			\State{//here tempN == set (if we reached the end of the queue we already returned false)}	
			
			\State{\textsc{Rescan}(headOfQueue ,tempHeadOfQueue ,tempHead ,tempN)} // perform the rescan

			\State {// \emph{tempN} now points to the first \texttt{set} element -- remove \emph{tempN}}
			\State{data = tempN$\rightarrow$data}
			\State{tempN$\rightarrow$isSet.store(State.handled)}\label{alg:removeTempN}
			\If{(tempHeadOfQueue==headOfQueue  \&\& tempHead ==head) //tempN ==n}
			\State{headOfQueue$\rightarrow$head++}
			\State{\textsc{MoveToNextBuffer()}}\label{alg:advanceHead}
			\EndIf
			\State{return true}
		\EndIf
		\EndFunction
		\algstore*{dequeueBreak}
\end{algorithmic}
\end{algorithm}

\begin{algorithm}[t]
	\caption{Folding a fully \texttt{handled} buffer in the middle of the queue}
	\label{alg:fold}
	\scriptsize
	\begin{algorithmic}[1]
		\algrestore*{dequeueBreak}
		\Function {fold}{bufferList* tempHeadOfQueue, unsigned int\& tempHead, bool\& flag\_moveToNewBuffer, bool\&  flag\_bufferAllHandeld}
		\If{tempHeadOfQueue == tailOfQueue.load()}
		\State{return false // the queue is empty -- we reached the tail of the queue}
		\EndIf
		
		\State{bufferList* next = tempHeadOfQueue$\rightarrow$next.load()}
		\State{bufferList* prev = tempHeadOfQueue$\rightarrow$prev}
		\If{next == NULL}
		\State{return false}  // we do not have where to move	
		\EndIf
		
		\State{// shortcut this buffer and delete it}
		\State{next$\rightarrow$prev = prev}
		\State{prev$\rightarrow$next.store(next)}
			\State{delete[] tempHeadOfQueue$\rightarrow$currbuffer}\label{alg:deleteArr}
			\State{garbageList.addLast(tempHeadOfQueue) }\label{alg:garbage}
		\State{tempHeadOfQueue = next}
		\State{tempHead = tempHeadOfQueue$\rightarrow$head}
		\State{flag\_bufferAllHandeld = true}
		\State{flag\_moveToNewBuffer = true}
		
		\State{return true}	
		
		\EndFunction
		\algstore*{dequeueBreak2}			
	\end{algorithmic}	
\end{algorithm}

\begin{algorithm}[t]
	\caption{MoveToNextBuffer -- a helper function to advance to the next buffer}
	\label{alg:moveToNextBuffer}
	\scriptsize
	\begin{algorithmic}[1]
		\algrestore*{dequeueBreak2}
		\Function {moveToNextBuffer}{}
		\If {headOfQueue$\rightarrow$head $\geq$ bufferSize}
		
		\If {headOfQueue == tailOfQueue.load()}
		\State{return false}			
		\EndIf
		\State{	bufferList* next = headOfQueue$\rightarrow$next.load()}
		
		\If {next == NULL}
		\State{return false}			
		\EndIf
		\State{bufferList* g =garbageList.getFirst() }\label{alg:deleteGarbage}
		\While {g$\rightarrow$positionInQueue < next$\rightarrow$positionInQueue}
		\State{garbageList.popFirst() }
			\State{delete g}
				\State{ g =garbageList.getFirst()}		
		\EndWhile\label{alg:deleteGarbageEnd}
		
		\State{delete headOfQueue}
		\State{headOfQueue = next}
		
		\EndIf
		\State{return true}			
		
		\EndFunction	
		\algstore*{dequeueBreak3}
	\end{algorithmic}	
\end{algorithm}

\begin{algorithm}[t]
	\caption{Scan the queue from n searching for a \texttt{set} element -- return false on failure}
	\label{alg:scan}
	\scriptsize
	\begin{algorithmic}[1]
		\algrestore*{dequeueBreak3}
		\Function {scan}{bufferList* tempHeadOfQueue, unsigned int\& tempHead, Node* tempN}
		
		\State{bool flag\_moveToNewBuffer = false , flag\_bufferAllHandeld=true}		
		
		\While{tempN$\rightarrow$isSet.load() != State.set}
		\State{tempHead++}
		
		\If{tempN$\rightarrow$isSet.load() != State.handled}
		\State{flag\_bufferAllHandeld = false}
		\EndIf
		
		\If{tempHead $\geq$ bufferSize} // we reach the end of the buffer -- move to the next
	
		\If{flag\_bufferAllHandeld \&\& flag\_moveToNewBuffer} // fold fully \textbf{handled} buffers
		\State{bool res = \textsc{Fold}(tempHeadOfQueue, tempHead, flag\_moveToNewBuffer,flag\_bufferAllHandeld)} 
		\If {!res}
			\State{return false;}			
		\EndIf
		\Else 	\State{// there is an empty element in the buffer, so we can't delete it; move to the next buffer}	
		\State{bufferList* next = tempHeadOfQueue$\rightarrow$next.load()}
		\If{next == NULL}
		\State{return false // we do not have where to move}
		\EndIf
		\State{tempHeadOfQueue = next}
		\State{tempHead = tempHeadOfQueue$\rightarrow$head}
		\State{flag\_bufferAllHandeld = true}
		\State{flag\_moveToNewBuffer = true}						
		\EndIf	
		
		\EndIf			
		\EndWhile		
		\EndFunction
		\algstore*{dequeueBreak4}			
	\end{algorithmic}	
\end{algorithm}

\begin{algorithm}[t]
	\caption{Rescan to find an element between \emph{n} and \emph{tempN} that changed from \texttt{empty} to \texttt{set}}
	\label{alg:rescan}
	\scriptsize
	\begin{algorithmic}[1]
		\algrestore*{dequeueBreak4}
		\Function {rescan}{bufferList* headOfQueue ,bufferList* tempHeadOfQueue, unsigned int\& tempHead ,Node* tempN  }		
			\State{// we need to scan until one place before tempN }
			\State{bufferList* scanHeadOfQueue = headOfQueue}
			\For{(unsigned int scanHead = scanHeadOfQueue$\rightarrow$head; ( scanHeadOfQueue != tempHeadOfQueue || scanHead < (tempHead-1) ); scanHead++)}			
			\If{scanHead $\geq$ bufferSize} // at the end of a buffer, skip to the next one
			\State{scanHeadOfQueue= scanHeadOfQueue$\rightarrow$next.load()}
			\State{scanHead = scanHeadOfQueue$\rightarrow$head}	
			\EndIf	
			
			\State{Node* scanN = \&(scanHeadOfQueue $\rightarrow$ currbuffer[scanHead]);}
			\State{// there is a closer element to \emph{n} that is \texttt{set} -- mark it and restart the loop from n}
			\If{scanN $\rightarrow$ isSet.load() == State.set} 
			\State{tempHead = scanHead}
			\State{tempHeadOfQueue = scanHeadOfQueue}	
			\State{tempN = scanN}
			\State{scanHeadOfQueue = headOfQueue}
			\State{scanHead = scanHeadOfQueue$\rightarrow$head}	
			
			\EndIf	
			
			\EndFor
		\EndFunction		
	\end{algorithmic}	
\end{algorithm}

\ifdefined\COVER
\section{Additional Figures and Performance Results}
\label{sec:additional-results}

\paragraph*{Queue Types}
Figure~\ref{fig:queues} exemplifies the difference between an SMPC queue and an MPSC queue.
SPMC queues are useful in master worker architectures.
In such scenarios, a single master queues up tasks to be executed, which are picked by worker threads on a first comes first served basis.
In contrast, MPSC queues are often helpful in sharded architectures.
In these cases, each shard is served by a single worker thread to avoid synchronization inside the shard. Multiple collector threads can feed the queue of each shard.

\begin{figure}[H]
\subfigure[SPMC queue used in a master worker architecture.]{
		\includegraphics[width=0.25\textwidth]{SPMC.png}
		\label{fig:SPMC}
}
\subfigure[MPSC queues used in a sharded architecture.]{
		\includegraphics[width=0.3\textwidth]{MPSC.png}	
		\label{fig:MPSC}
}
	\caption{SPMC vs. MPSC queues.}
	\label{fig:queues}
\end{figure}

\ifdefined\COVER
\paragraph*{The Linearizability Pitfall of the Basic Queue Idea}
Figure~\ref{fig:linearizable} exhibits why the naive proposal for the queue violates linearizability.
This scenario depicts two concurrent enqueue operations $enqueue_1$ and $enqueue_2$ for items $i_1$ and $i_2$ respectively, where $enqueue_1$ terminates before $enqueue_2$.
Also, assume a dequeue operation that overlaps only with $enqueue_2$ (it starts after $enqueue_1$ terminates).
It is possible that item $i_2$ is inserted at an earlier index in the queue than the index of $i_1$ because this is the order by which their respective \emph{FAA} instructions got executed.
Yet, the insertion of $i_1$ terminates quickly while the insertion of $i_2$ (into the earlier index) takes longer.
Now the dequeue operation sees that the \emph{head} of the queue is still empty, so it returns immediately with an empty reply.
But since $enqueue_1$ terminates before the dequeue starts, this violates linearizability.
\begin{figure}[H]
	\center{
		\includegraphics[width = 0.22\textwidth]{linearizable2.png}
	}
	\caption{A linearizability problem in the basic queue.}
	\label{fig:linearizable}
\end{figure}
\fi

\ifdefined\COVER
\paragraph*{Memory Buffer Pool Optimization}
Instead of always allocating and releasing buffers from the operating system, we can maintain a buffer pool.
This way, when trying to allocate a buffer, we first check if there is already a buffer available in the buffer pool.
If so, we simply claim it without invoking an OS system call.
Similarly, when releasing a buffer, rather than freeing it with an OS system call, we can insert it into the buffer pool.
It is possible to have a single shared buffer pool for all threads, or let each thread maintain its own buffer pool.
This optimization can potentially reduce execution time at the expense of a somewhat larger memory heap area.
The code used for \MYQUEUE's performance measurements does \emph{not} employ this optimization.
In \MYQUEUE, allocation and freeing of buffers are in any case a relatively rare event.
\fi

\ifdefined\COVER
\paragraph*{Memory Usage with 128 Threads}

Table~\ref{tab:128threads} lists the Valgrind statistics for the run with 127 enqueuers and one dequeuer.
Here, \MYQUEUE's heap consumption has grown to 77.10 MB, yet it is 87\% less than WFqueue's consumption, 87\% less than CCqueue and
96.8\% less than MSqueue.
As mentioned above, this memory frugality is an important factor in \MYQUEUE's cache friendliness, as can be seen by the cache miss statistics of the CPU data caches (D1 and L3d miss).

\begin{table}[H]
	\resizebox{\textwidth}{!}{%
		\begin{tabular}{|l||l||l|l||l|l||l|l||l|l|}
			\hline
			\multirow{10}{*}{} &
			\multicolumn{1}{c||}{Jiffy} &
			\multicolumn{2}{c||}{WF} &
			\multicolumn{2}{c||}{LCRQ}&
			\multicolumn{2}{c||}{CC}&
			\multicolumn{2}{c|}{MS}\\
			& Absolute  & Absolute &Relative & Absolute &Relative & Absolute & Relative & Absolute & Relative \\
			\hline
			Total Heap Usage & 77.10 MB &611.70 MB & x7.93 &1.19 GB &x15.8 & 605.68 MB & x7.85 &2.36 GB &x31.42 \\
			\hline
			Number of Allocs & 6409 & 10045 & x1.57&2819 &x0.44 & 9922394 &x1548 & 9922263 & x1548 \\
			\hline
			Peak Heap Size & 87.42 MB  &  624.5 MB & x7.14 &1.191 GB  &x13.94 &  796.7 MB &x9.11 &2.426 GB &x28.42 \\
			\hline
			\# of Instructions Executed & 678,651,794 & 3,099,458,758 &x4.57  &1,671,748,656  &x0.99 & 8,909,842,602 &x13.13 & 11,944,994,600 & x17.60  \\
			\hline
			I1  Misses & 2,238 &1,724 & x0.77 &1,669 &x0.75 & 1,668 & x0.75& 1,689 & x0.75 \\
			\hline
			L3i Misses &2,194 & 1,717 &x0.78 & 1,658&x0.76 & 1,664 & x0.76 &1,684 &x0.77 \\
			\hline
			Data Cache Tries (R+W)& 352,980,193 & 1,849,850,595 & x5.24  & 690,008,884 &x1.95 & 3,172,272,116 &x8.99 & 3,237,610,091 & x9.17 \\
			\hline
			D1  Misses & 2,643,266 & 51,230,800 &x19.38 &30,012,317 &x11.35  &68,349,604 &x25.86 & 79,097,745 & x29.92 \\
			\hline
			L3d Misses & 1,298,646 & 40,453,921 & x31.15  & 19,946,542  &x15.36  & 57,287,592 & x44.11 & 64,219,733 & x49.45 \\
			\hline
		\end{tabular}
	}
	\normalsize
	\caption{
		Valgrind memory usage statistics with 127 enqueuers and one dequeuer. I1/D1 is the L1 instruction/data cache respectively while L3i/L3d is the L3 instruction/data cache respectively.
	}
	\label{tab:128threads}
\end{table}
\fi

\fi

\end{document}